\newcommand{\dbtilde}[1]{\accentset{\approx}{#1}}
\DeclarePairedDelimiter\ceil{\lceil}{\rceil}
\def\nn{\nonumber}
\begin{document}
%
\title{The Feedback Capacity of Noisy Output is the STate (NOST) Channels
\thanks{E. Shemuel was supported by the Ministry of Science and Technology of Israel. This work was supported by the German Research Foundation (DFG) via the German-Israeli Project Cooperation [DIP] and by the ISF research grant 818/17. 
O. Sabag is partially supported by the ISEF postdoctoral fellowship. E. Shemuel and H. H. Permuter are with the School of Electrical and
Computer Engineering, Ben-Gurion University of the Negev, Beersheba
8410501, Israel (e-mail: els@post.bgu.ac.il; haimp@bgu.ac.il). O. Sabag is with the Department of Electrical Engineering, Caltech, (e-mail: oron@caltech.edu).}}
%
%
%
\author{Eli Shemuel, Oron Sabag, Haim H. Permuter}
\maketitle
\begin{abstract}
We consider finite-state channels (FSCs) where the channel state is stochastically dependent on the previous channel output. We refer to these as Noisy Output is the STate (NOST) channels. We derive the feedback capacity of NOST channels in two scenarios: with and without causal state information (CSI) available at the encoder.
If CSI is unavailable, the feedback capacity is $C_{\text{FB}}= \max_{P(x|y')} I(X;Y|Y')$, while if it is available at the encoder, the feedback capacity is $C_{\text{FB-CSI}}= \max_{P(u|y'),x(u,s')} I(U;Y|Y')$, where $U$ is an auxiliary RV with finite cardinality. In both formulas, the output process is a Markov process with stationary distribution.
The derived formulas generalize special known instances from the literature, such as where the state is i.i.d. and where it is a deterministic function of the output. $C_{\text{FB}}$ and $C_{\text{FB-CSI}}$ are also shown to be computable via convex optimization problem formulations. Finally, we present an example of an interesting NOST channel for which CSI
available at the encoder does not increase the feedback capacity.
\end{abstract}

\begin{IEEEkeywords}
Channel capacity, channels with memory, convex optimization, feedback capacity, finite state channels.  
\end{IEEEkeywords}
%
\IEEEpeerreviewmaketitle

\theoremstyle{plain}
\newtheorem{thm}{Theorem}[section]
\newtheorem{lem}[thm]{Lemma}
\newtheorem{prop}[thm]{Proposition}
\newtheorem*{cor}{Corollary}
\newtheorem*{KL}{Klein’s Lemma}

\newtheoremstyle{colon}%
{}
{}
{\itshape}
{}
{\itshape}
{:}
{ }
{}
\theoremstyle{colon}
\newtheorem{question}{Question}
\newtheorem{claim}{Claim}
\newtheorem{guess}{Conjecture}
\newtheorem{fact}{Fact}
\newtheorem{assumption}{Assumption}
\newtheorem{theorem}{Theorem}
\newtheorem{lemma}{Lemma}
\newtheorem{ctheorem}{Corrected Theorem}
\newtheorem{corollary}{Corollary}
\newtheorem{proposition}{Proposition}
\newtheorem{example}{Example}

\theoremstyle{definition}
\theoremstyle{remark}
\newtheorem{conj}{Conjecture}[section]
\newtheorem{exmp}{Example}[section]

\newtheoremstyle{def}%
{}
{}
{}
{}
{\itshape}
{:}
{ }
{}
\theoremstyle{def}
\newtheorem{definition}{Definition}
\newtheorem{remark}{Remark}
\newtheorem{discussion}{Discussion}

\theoremstyle{remark}
\newtheorem*{rem}{Remark}
\newtheorem*{note}{Note}
\newtheorem{case}{Case}

\def\cS{{\mathcal S}}
\def\cX{{\mathcal X}}
\def\cU{{\mathcal U}}
\def\cV{{\mathcal V}}
\def\cP{{\mathcal P}}
\def\cQ{{\mathcal Q}}
\def\cY{{\mathcal Y}}
\def\cC{{\mathcal C}}

\section{Introduction}
\IEEEPARstart{T}{he} popular model of finite-state channels (FSCs) \cite{Blackwell58,Gallager68,GBAA,Ziv85,PfisterISI,RGray} has been motivated by channels or systems with memory, 
common in wireless communication \cite{Sadeghi,ZHANG,turin1990performance,FSCTransWirelessComm,FSCTransWirelessComm1,pimentel2004finite,zhong2008model}, 
molecular communication \cite{MolecFSCTransComm,MolecularSurvey}
and magnetic recordings \cite{Immink}. 
The memory of a channel or a system is encapsulated in a finite set of states in the FSC model. Although feedback cannot increase the capacity of 
memoryless channels \cite{shannon56,Kadota--Zakai--Ziv1971a}, it can generally increase the capacity of channels with memory. Nonetheless, in the general case, both the capacity and the feedback capacity of FSCs 
were characterized by multi-letter expressions that are non-computable, and they still have no simple closed-form formulas. That is, 
a general capacity formula for channels with memory was given as the limit of the $n$-fold mutual information sequence \cite{Dobrusin59,Pinsker60,Gallager68,Verdu94}, whereas the feedback capacity was commonly expressed by the limit of the $n$-fold directed information \cite{massey1990causality,Kramer98,Kim07feedback,TatikondaMitter_IT09,PermuterWeissmanGoldsmith09,Permuter06_trapdoor_submit,ShraderPermuter09CompoundIT,dabora2012capacity,shemuel2018finite,shemuel2020feedback}.

The explicit capacity of FSCs is known only 
in a few instances of channels with memory where feedback does not increase the capacity, such as the POST($\alpha$) channel \cite{POSTchannel} and channels with certain symmetric properties \cite{Alajaji95,Alajaji94,SongAlajaji_Burst,SymmetricFSM}.
%
Further, a single-letter expression was derived in \cite{ViswanathanCapMarkov} for the capacity of FSCs with state information known at the receiver and delayed feedback in the absence of inter-symbol interference (ISI), i.e., 
the channel state is input-independent. 
There are some additional special cases of FSCs where the feedback capacity is known explicitly.
One method to compute explicit feedback capacity expression is by formulating it as a dynamic programming (DP) optimization
problem, as was first introduced in Tatikonda's thesis \cite{Tatikonda00} and then in \cite{Yang05,YangKavcicTatikondaGaussian,chen2005capacity,Permuter06_trapdoor_submit,TatikondaMitter_IT09,shemuel2018finite}. This is beneficial in 
estimating the feedback capacity using efficient algorithms such as the value iteration algorithm \cite{Bertsekas00}, which, in turn, can help in generating a conjecture for the exact solution of the corresponding Bellman equation \cite{Arapos93_average_cose_survey}.
Thus, for a family of FSCs with ISI called unifilar FSCs, in which the new channel state is a time-invariant function of the previous
state, the current input and the current output, the feedback capacity 
can be computed via DP, as was formulated in \cite{Permuter06_trapdoor_submit}, and closed-form expressions or exact values for the feedback capacity of particular unifilar FSCs were derived in \cite{Permuter06_trapdoor_submit,wu2016capacity,Sabag_BEC,Ising_channel,Sabag_BIBO_IT,Sabag_UB_IT,PeledSabagBEC,AharoniSabag_RL}.
For a sub-family of unifilar FSCs where the channel state is a deterministic function of the channel output, a single-letter feedback capacity expression was derived and formulated as a DP optimization problem in \cite{chen2005capacity}. 
Another method to compute explicit capacity expressions is the $Q$-graph that was introduced and utilized in \cite{Sabag_UB_IT,Graph-Based,SabagTwoWay,AharoniSabag_RL}. 

We are motivated to derive the feedback capacity of FSCs with ISI and stochastic state evolution as a single-letter, computable expression. In this paper, we consider a generalization of the unifilar FSCs studied in \cite{chen2005capacity}, where the channel state is stochastically dependent on the previous channel output; we refer this as "Noisy Output is the STate (NOST) channels". We study two scenarios of NOST channels
subject to causal state information (CSI) availability at the encoder, as illustrated in Fig. \ref{fig:setting}.
\begin{figure}[t]
\begin{center}
\begin{psfrags}
    \psfragscanon
    \psfrag{E}[][][1]{$M$}
    \psfrag{S}[][][1]
    {\raisebox{0.7cm}{$S^{i-1}$\hspace{0.1cm}}}
    \psfrag{T}[][][1]
    {\raisebox{0.2cm}{I\hspace{-0.1cm}}}
    \psfrag{U}[][][1]
    {\raisebox{0cm}{II\hspace{0.35cm}}}
    \psfrag{A}[\hspace{2cm}][][1]{Encoder}
	 \psfrag{F}[\hspace{1cm}][][1]{$X_i$}
	 \psfrag{B}[][][1]{$Q(y_i|x_i,s_{i-1})Q(s_i|y_i)$}
	 \psfrag{G}[][][1]{$Y_i$}
	 \psfrag{C}[\hspace{2cm}][][1]{Decoder}
	 \psfrag{K}[][][1]{$\hat{M}$}
	 \psfrag{H}[\hspace{2cm}][][1]{$Y_i$}
	 \psfrag{D}[\hspace{2cm}][][1]{Delay}
	 \psfrag{J}[\vspace{2cm}\hspace{2cm}][][1]{$Y_{i-1}$}
	 \psfrag{L}[\hspace{2cm}][][1]{Finite-State Channel}
	 \psfrag{I}[][][1]{}
\includegraphics[scale=0.8]{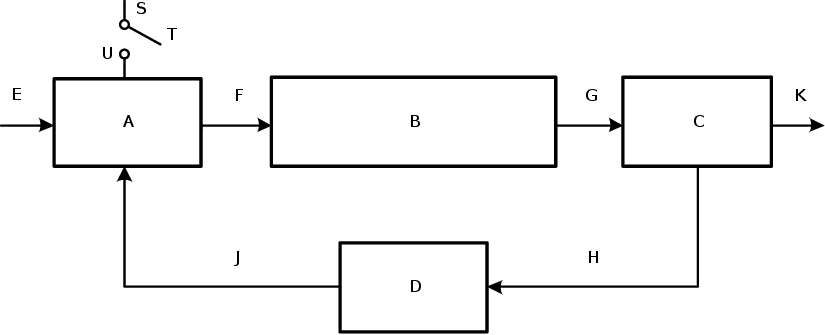}
\caption{
NOST channels in the presence of feedback. Setting I (open switch) - no CSI is available. Setting II (closed switch) - with CSI available at the encoder.}
\label{fig:setting}
\psfragscanoff
\end{psfrags}
\end{center}
\end{figure}
The first is where CSI is unavailable, while the second is where it is available at the encoder. For both these scenarios of NOST channels, the main contribution of this paper is single-letter, computable feedback capacity formulas. We show the computability of the formulas of both settings via formulating them as convex optimization problems. 
The achievability of the feedback capacity of the first setting is based on rate-splitting and random coding, and a similar proof was also given in \cite{ViswanathanCapMarkov}. A posterior matching scheme \cite{shayevitz_posterior_mathcing},
a principle that was also used in \cite{Sabag_BIBO_IT}, can also be used for the achievability from the work of \cite{bae_posterior_FSC}. On the other hand, the converse of the feedback formula is based on a recently developed technique to derive upper bounds with stationary distributions \cite{OronStationaryConverse}. In fact, the first setting can be shown to be equivalent to the setting mentioned in \cite{chen2005capacity}, and our formula is identical to theirs; however, our result generalizes and elaborates upon their seminal result in two ways. Firstly, the feedback capacity formula in \cite{chen2005capacity} is subject to an assumption that we relax in this work to a mild condition, thereby enabling us to determine the feedback capacity of various more channels, such as the POST($\alpha$) channel.
This condition plays a role in our derivation of the achievability.
The second contribution is the aforementioned computability of the formula via convex optimization.

The second setting in this paper, where CSI is available at the encoder, is innovative and interesting, as this side information may generally be beneficial for increasing the feedback capacity of channels with memory. 
The capacity problem of discrete memoryless channels (DMCs) with states known at the encoder dates back to Shannon's early work \cite{Shannon58}, followed by works of Kusnetsov and Tsybakov \cite{kuznetsov1974coding}, Gel’fand and Pinsker \cite{GePi80}, and Heegard and El Gamal \cite{HeegardElGamal_state_encoded83}, which paved the way to various recent works such as \cite{Weissman_action_dependtent_state10,choudhuri2013causal,WiretapChia,FeedbackCribbing,shemuel2018finite,shemuel2020feedback}.
Since Shannon showed in \cite{shannon56} that feedback does not increase the capacity of a DMC, his setting in \cite{Shannon58},
where the state process is i.i.d. and known causally at the encoder, is covered by our second setting by assuming, in particular, that the state is independent of the output. Further, we show that the capacity expression of this Shannon's setting is covered by ours. 

The remainder of the paper is organized as follows. Section \ref{sec:probDef} defines the notation and the settings. Section \ref{sec:main_results} presents the main results concerning the single-letter feedback capacity expressions and their convex optimization formulations. 
Section \ref{sec:specialCases} shows how the capacity expression of each setting covers the capacity characterization of known special cases from the literature,
and provides an interesting example of a connected NOST channel for which CSI available at the encoder does not increase its feedback capacity; this example is referred to as "the noisy-POST$(\alpha,\eta)$ channel", and is a generalization of the POST($\alpha$) channel \cite{POSTchannel}. 
Section \ref{sec:proof_cap} provides proofs and derivations of the main results. Finally, Section \ref{sec:conclusions} concludes this work.
%
%
%
%
%

\section{Problem Definition}
\label{sec:probDef}
In this section, we introduce the notation and the communication setup.
\subsection{Notation}
Lowercase letters denote sample values (e.g. $x,y$), and uppercase letters denote discrete random variables (RVs) (e.g. $X,Y$). Subscripts and superscripts denote vectors in the following way: $x_i^j=(x_i,x_{i+1},...,x_j)$ and $X_i^j=(X_i,X_{i+1},...,X_j)$ for $1\leq i \leq j$. $x^n$ and $X^n$ are shorthand for $x_1^n$ and $X_1^n$, respectively.
We use calligraphic letters (e.g. $\cX,\cY$) to denote alphabets, and $|\cdot|$ (e.g. $|\cX|$) to denote the cardinality of an alphabet. For two RVs $X,Y$ the probability mass function (PMF) of $X$ is denoted by $P_{X}(x)$, the conditional PMF of $X=x$ given $Y=y$ is denoted by $P_{X|Y}(x|y)$, and the joint PMF is denoted by $P_{X,Y}(x,y)$; the shorthand $P(x),P(x|y),P(x,y)$ are used for the above, respectively, when the RVs are clear from the context.
The indicator function is denoted by $\mathbbm{1}{\{\cdot\}}$. We define $\bar{a}\triangleq 1-a$ for some $a \in [0,1]$. 
For a pair of integers $n\le m$, we define the discrete interval $[n:m]\triangleq \{n,n+1,\dots,m\}$. 

\subsection{The Communication Setup}
\label{subsec:TheCommunicationSetup}
In this work, we consider FSCs as shown in Fig. \ref{fig:setting}.
An FSC \cite{Gallager68} consists of finite input, output and channel state alphabets $\cX,\cY,\cS$, respectively.
It is defined by the model ($\cX \times \cS$, $Q(y,s|x,s')$, $\cY$) where $s',s$ are the channel state at the beginning and at the end of a transmission, respectively.
The channel is stationary in the sense that when it is used $n$ times with message $M$ and inputs $X^n$, at time $i\in [1:n]$ given the past, it has the Markov property
\begin{align}
Q(y_i,s_i|x^i,s_0^{i-1},y^{i-1},m) 
&= Q_{Y,S|X,S'}(y_i,s_i|x_i,s_{i-1}) \label{eq:DefFSC}\\ 
&= Q_{Y|X,S'}(y_i|x_i,s_{i-1})Q_{S|Y}(s_i|y_i) \label{eq:DefNOST},
\end{align}
where \eqref{eq:DefFSC} holds for any general FSC and \eqref{eq:DefNOST} 
is particularized for NOST channels. We also use the averaged channel defined by 
\begin{align}\label{eq:y-xy}
  Q_{Y|X,Y'}(y|x,y') = \sum_{s'\in \cS} Q_{S|Y}(s'|y') Q_{Y|X,S'}(y|x,s'), 
\end{align}
where $y',y \in \cY$ can be interpreted as the channel output before and after a transmission, respectively.
The initial channel state is assumed to be distributed according to $Q(s_0)$ and will be shown to have no effect on the feedback capacity solution subject to a mild connectivity assumption. We consider two settings of NOST channels: without and with CSI available at the encoder. 
\subsubsection{Setting I - No CSI}
At time $i$, the encoder has access to the message $m\in \mathcal{M}$ and the outputs' feedback, where the message set is $\mathcal{M}=[1:\ceil{2^{nR}}]$ and $M$ is assumed to be uniformly distributed over $\mathcal{M}$. The encoder's mapping at time $i$ is denoted by 
\begin{align}
\label{eq:enc}
    x_i:\mathcal{M} \times \cY^{i-1} \to \cX,
\end{align}
and the decoder's mapping is
\begin{align}
\label{eq:dec}
    \hat{m}: \cY^n \to \mathcal{M}.
\end{align}
A $(2^{nR},n)$ code is a pair of encoding and decoding mappings \eqref{eq:enc}-\eqref{eq:dec}.
A rate $R$ is \textit{achievable} if there exists a sequence of $(2^{nR},n)$ codes such that the \textit{average probability of error}, $P_{e}^{(n)}=\Pr (\hat{M}\neq M)$, tends to zero as $n\to \infty$. The \textit{feedback capacity} is the supremum over all achievable rates, and is denoted by $C_{\text{FB}}$.

\subsubsection{Setting II - CSI Available at the Encoder}
Setting II is defined similarly to Setting I, 
except that at time $i$ the encoder also has access to all previous states, i.e., causally. Hence, the encoder's mapping at time $i$ is denoted by 
\begin{align}
\label{eq:enc2}
    x_i:\mathcal{M} \times \cS_{0}^{i-1} \times \cY^{i-1} \to \cX.
\end{align}
A $(2^{nR},n)$ code is a pair of encoding and decoding mappings given by \eqref{eq:enc2} and \eqref{eq:dec}, respectively. The feedback capacity of Setting II is denoted by $C_{\text{FB-CSI}}$. 

For the sake of simplicity when defining a property called \textit{connectivity} as follows
and writing proofs throughout the paper, without loss of generality, we can assume the existence of an initial output, $Y_0$. That is, except for the case where the state is the output, $Y_0$ is fictitious with some arbitrary distribution $Q(y_0)$ and is independent of all other variables.

Now, for both Setting I and Setting II, we assume that the NOST channels are connected.
\begin{definition}[Connectivity]
\label{definition:connceted}
A NOST channel \eqref{eq:DefNOST} is \textit{connected} if for any pair of outputs, $y',y\in \cY$, there exist an integer $T$ (shorthand for $T(y',y)\in \mathbbm{N}$) and a sequence of channel inputs $x^{T}$ (shorthand for $x^T(y',y)$) such that 
$Q_{Y_{T}|X^T,Y_0}(y|x^{T},y')>0$.
\end{definition}

Alternatively, we may formulate Definition \ref{definition:connceted} as follows. Assume, without loss of generality, that $\cY=[1:|\cY|]$, and denote by $Q$ the $|\cY|\times |\cY|$ matrix $[Q_{i j}], i,j\in \cY$, where 
\begin{align}
Q_{i j} \triangleq \max_{x \in \cX} Q_{Y_k|X_k,Y_{k-1}}(j|x,i). \label{eq:Qij}
\end{align}
A NOST channel \eqref{eq:DefNOST} is \textit{connected} if
for all $i,j \in \cY$ there exists an integer $T(i,j)$ such that $Q$ satisfies 
\begin{align}
(\overbrace{Q\cdots Q}^{T(i,j)\ \text{times} })_{i j}>0.
\end{align}
This definition is equivalent to Definition \ref{definition:connceted} since both of them imply that for any initial output $y'$ (or $i$)
and any desired output $y$ (or $j$), there exists a sequence of channel inputs such that there is a positive probability of reaching $y$ from $y'$.
\section{Main Results}
\label{sec:main_results}
In this section we present our main results pertaining to Setting I and Setting II.

\subsection{Setting I -- No CSI}
\label{subsection:MainResSet1}
The following theorem characterizes the capacity of Setting I, $C_{\text{FB}}$, as a single-letter expression.

\begin{theorem}[Feedback Capacity of Setting I]
\label{theorem:cap}
The feedback capacity of a connected NOST channel without CSI is given by
\begin{align}
C_{\text{FB}}&= \max_{P(x|y')} I(X;Y|Y'),
\label{eq:cap}
\end{align}
where the joint distribution is $P(y',x,y)=\pi(y')P(x|y') Q(y|x,y')$, $Q(y|x,y')$ is defined in \eqref{eq:y-xy}, and $\pi(y')$ is a stationary distribution induced by the Markov kernel $P(y|y')=\sum_x P(x|y') Q(y|x,y')$.
\end{theorem}
We note that $C_{\text{FB}}$ is not affected by the initial state, $s_0$. Furthermore, $Q(y|x,y')$ \eqref{eq:y-xy} given as a part of the joint distribution, implies that $S'-Y'-X$ forms a Markov chain.
By definition, $\pi(y')$ is a stationary distribution if it 
is a solution of $\pi P=\pi$, where $\pi$ is a probability vector on $\cY$, and $P$ is the probability transition matrix $P(y|y')$, whose rows and columns represent the previous and next outputs $y',y\in \cY$, respectively; thus $P_{Y'}(y')=\pi(y')=P_Y(y'), \forall y' \in \cY$.
From the assumption that $\cY$ is a finite set, there is always at least one stationary output distribution (see, e.g., \cite[Chapter~5.5]{durrett2019probability}) given any input distribution. If the stationary distribution is not unique, there are infinitely many stationary output distributions\footnote{For instance, if $|\cY|=2$ and $P(x|y')$ induces a transition matrix $P(y|y')$ given by 
the identity matrix of size $2$, $I_2$, all output distributions are stationary, i.e.,  any $\pi \triangleq [p \; \bar{p}], p\in [0,1]$ is a distribution solving $\pi I_2 = \pi$.}.
However, the following lemma states that the maximum in \eqref{eq:cap} can always be attained by an input distribution that induces a unique stationary output distribution.
\begin{lemma}
\label{lem:Eli}
For a connected NOST channel without CSI, 
\begin{align}
\max_{P(x|y')} I(X;Y|Y')=\max_{P(x|y')\in \cP_{\pi}} I(X;Y|Y'),
\end{align}
where $\cP_{\pi}$ is defined as
$\cP_{\pi}\triangleq\{P(x|y'): \text{there exists a unique stationary output distribution }\pi(y')\}$, and this set is non-empty.
\end{lemma}

A special case of Setting I is where the channel state is the output, i.e., $s_i=y_i$. 
Conversely, by the joint distribution given in Theorem \ref{theorem:cap}, it can be seen that $C_{\text{FB}}$ depends on the averaged channel $Q(y|x,y')$ \eqref{eq:y-xy} with a fictitious state $y'$; hence, Setting I and this special case are operationally equivalent. 
The setting where $s_i=y_i$ has been studied in \cite{chen2005capacity}, and the feedback capacity was derived under some assumptions on the channel; nevertheless, our result generalizes upon their seminal results in two ways. 
First, we relax the assumptions in \cite{chen2005capacity} to a connectivity condition (Definition \ref{definition:connceted}), which allows us to determine the feedback capacity of a wide family of channels, e.g., the POST($\alpha$) channel, whose feedback capacity
was derived in \cite{POSTchannel}. We discuss the relaxation issue and demonstrate the connectivity condition in Section \ref{sec:specialCases}.
Second, another contribution of our work regarding Setting I is a novel convex optimization formulation of $C_{\text{FB}}$, given in the following theorem. 
From \eqref{eq:cap} it is not clear if $I(X;Y|Y')$ is a concave function of $P(x|y')$; however, Theorem \ref{theorem:convexity} clarifies that it is, in fact, a concave function of the joint distribution, $P(y',x)$.
\begin{theorem}[Convex Optimization for $C_{\text{FB}}$]
\label{theorem:convexity}
The feedback capacity of a connected NOST channel without CSI, $C_{\text{FB}}$, can be formulated as the following convex optimization problem:
\begin{subequations}
\begin{alignat}{2}
&\!\underset{P(y',x) \in \cP(\cY \times \cX)}{\mathrm{max}}        &&  
\qquad I(X;Y|Y') \label{eq:optProb1Obejective}
\\
&\mathrm{subject}\text{ }\mathrm{to}                && \sum\limits_{x,y}P_{Y',X}(\tilde{y},x)Q_{Y|X,Y'}(y|x,\tilde{y})-\sum\limits_{y',x}P_{Y',X}(y',x)Q_{Y|X,Y'}(\tilde{y}|x,y')  = 0, \; \forall \tilde{y} \in \cY. \label{eq:condStationary}
 \end{alignat}
\end{subequations}
\end{theorem}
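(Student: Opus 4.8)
The plan is to establish that the program \eqref{eq:optProb1Obejective}--\eqref{eq:condStationary} is the maximization of a concave objective over a convex set whose optimal value coincides with $C_{\text{FB}}$ of Theorem~\ref{theorem:cap}. The key idea is to change the optimization variable from the conditional $P(x|y')$, for which $\pi(y')$ enters \eqref{eq:cap} as a nonlinear (stationary-distribution) functional, to the \emph{joint} $P(y',x)$, under which every distribution of interest becomes affine. First I would identify the feasible set: since $\sum_y Q(y|x,\tilde{y})=1$, the first sum in \eqref{eq:condStationary} collapses to the marginal $\sum_x P(\tilde{y},x)=P(\tilde{y})$, whereas writing $P(y',x)=P(y')P(x|y')$ turns the second sum into $\sum_{y'}P(y')\sum_x P(x|y')Q(\tilde{y}|x,y')=\sum_{y'}P(y')P(\tilde{y}|y')$. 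Hence \eqref{eq:condStationary} is exactly the stationarity equation $P(\tilde{y})=\sum_{y'}P(y')P(\tilde{y}|y')$ of the chain $P(y|y')=\sum_x P(x|y')Q(y|x,y')$, so the feasible joints are precisely those of the form $\pi(y')P(x|y')$ with $\pi$ stationary. Because the objective $I(X;Y|Y')$ in \eqref{eq:cap} is computed with this same joint, and connectivity guarantees the stationary distribution is well defined, the two problems range over the same joints and therefore share the optimal value $C_{\text{FB}}$.

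It remains to prove that $I(X;Y|Y')=H(Y|Y')-H(Y|X,Y')$ is concave in $P(y',x)$. Since $Q(y|x,y')$ is fixed, both $P(y',y)=\sum_x P(y',x)Q(y|x,y')$ and its marginal $P(y')$ are affine in $P(y',x)$, and the conditional term $H(Y|X,Y')=\sum_{y',x}P(y',x)\big(-\sum_y Q(y|x,y')\log Q(y|x,y')\big)$ is itself affine in $P(y',x)$ because each bracket is a constant. For the remaining term I would write $H(Y|Y')=\sum_{y',y}P(y',y)\log\frac{P(y')}{P(y',y)}$ and observe that each summand has the form $(p,q)\mapsto p\log(q/p)$, the perspective of the concave function $\log$, which is jointly concave in $(p,q)\ge 0$; taking $p=P(y',y)$ and $q=P(y')$, both affine in $P(y',x)$, shows $H(Y|Y')$ is concave in $P(y',x)$. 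Thus $I(X;Y|Y')$ is concave as a difference of a concave and an affine function, and since \eqref{eq:condStationary} is a system of affine equalities intersected with the simplex $\cP(\cY\times\cX)$, the feasible set is convex; the program is therefore a concave maximization.

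The main obstacle is precisely this concavity: as noted after \eqref{eq:cap}, it is not visible in the variable $P(x|y')$, and the entire argument depends on recognizing that the joint variable $P(y',x)$ renders $H(Y|X,Y')$ affine and exhibits $H(Y|Y')$ as a sum of perspective functions of $\log$. A secondary point I would verify carefully is the value-equivalence when the induced chain fails to be irreducible and the stationary distribution in \eqref{eq:cap} is non-unique; here it is enough that the constraint \eqref{eq:condStationary} certifies $P(y')$ as \emph{a} stationary distribution, which suffices to match objectives in both directions, with connectivity removing the ambiguity in the cases of interest.
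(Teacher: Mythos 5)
Your proposal is correct and takes essentially the same route as the paper: both treat the joint $P(y',x)$ as the variable, note that the constraints \eqref{eq:condStationary} and the term $H(Y|X,Y')$ are affine in $P(y',x)$, and establish concavity of $H(Y|Y')$ as a jointly concave function of two affine arguments — your perspective-of-$\log$ argument is the same fact the paper invokes as convexity of the relative entropy $D\big(P(y',y)\,\big\|\,P(y')U(y)\big)$ with $U$ uniform. The additional material you supply (identifying the feasible set with the stationary joints $\pi(y')P(x|y')$ so that the program's value equals \eqref{eq:cap}, plus the non-uniqueness caveat) is not in the paper's proof but is covered by the remark following the theorem statement, so it is a welcome elaboration rather than a different approach.
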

The benefit in formulating the feedback capacity as a convex optimization problem is the ability thus afforded to compute
it via implementing known convex optimization algorithms. Notice that the stationarity of the outputs distribution is expressed by Constraints \eqref{eq:condStationary}, which are equivalent to $P_{Y'}(\tilde{y})=P_Y(\tilde{y})$. We note that a similar approach was taken in \cite{Graph-Based}.

\subsection{Main Results of Setting II -- CSI Available at the Encoder}
The following theorem characterizes the capacity of Setting II, $C_{\text{FB-CSI}}$.
\begin{theorem}[Feedback Capacity of Setting II]
\label{theorem:cap2}
The feedback capacity of a connected NOST channel when the state information is available causally at the encoder is
\begin{align}
C_{\text{FB-CSI}}&= \max_{P(u|y'),x(u,s')} I(U;Y|Y'),
\label{eq:cap2}
\end{align}
where the joint distribution is $P(y',u,y)=\pi(y')P(u|y') P_f(y|u,y')$, in which
\begin{align}
    P_f(y|u,y')= \sum_{s',x} Q(s'|y') \mathbbm{1}\{x=f(u,s')\} Q(y|x,s'),
    \label{eq:y-uy}
\end{align}
$\pi(y')$ is a stationary distribution induced by the Markov kernel $P(y|y')=\sum_{u} P(u|y') P_f(y|u,y')$, and $U$ is an
auxiliary RV with
$|\cU|\le L \triangleq \min \{ (|\cX|^{|\cS|},(|\cX|-1)|\cS||\cY|+1, (|\cY|-1)|\cY|+1 \}$.
\end{theorem}

We note that \eqref{eq:y-uy} implies that $S'-Y'-U$ forms a Markov chain.
The feedback capacity expression in \eqref{eq:cap2} is interesting, as it combines the idea of an auxiliary RV and stationary distributions. This is the first appearance in the literature of such a combination in a single-letter capacity expression. Any auxiliary input letter $u \in \cU$ represents a distinct deterministic mapping from $\cS$ to $\cX$. Such mappings are called \textit{strategies}, and were first introduced in Shannon's work \cite{Shannon58}.
It is clear from \eqref{eq:cap2} that CSI increases the feedback capacity in the general case, because choosing $f$ to be $f(u,s)=u$, where $\cU=\cX$, i.e., $x$ and $u$ are identical, gives \eqref{eq:cap}. Furthermore, although there is generally a total of $|\cX|^{|\cS|}$ strategies, $C_{\text{FB-CSI}}$ can be achieved with at most $L$ of them. Hence, the maximization on $f(u,s')$, which is $x$, in \eqref{eq:cap2} is to choose a subset of $L$ maximizing strategies from the set of all strategies (that is, $\binom{|\cX|^{|\cS|}}{L}$ ways to choose in total).
We note that in the case where the state sequence is i.i.d., $C_{\text{FB-CSI}}$ recovers the capacity derived by Shannon \cite{Shannon58} (feedback cannot increase the capacity of DMCs \cite{shannon56}) with the cardinality bound
$|\cU|\le \min \{(|\cX|-1)|\cS|+1, |\cY|\}$ (see, e.g., \cite{el2011network}). In comparison, our general cardinality bound, $L$, has a multiplication by $|\cY|$ because of the memory preserved by the previous output, but both cardinality bounds coincide in the case of i.i.d. states.

Analogically to Lemma \ref{lem:Eli} for the case without CSI, the following Lemma declares that the maximum in \eqref{eq:cap2} can particularly be attained by an input distribution and a function $f:\cU \times \cS \to \cX$ that induce a uniqueness of the stationary output distribution.

\begin{lemma}
\label{lem:Eli2}
For a connected NOST channel with CSI available at the encoder, 
\begin{align}
\max_{P(u|y'),x(u,s')} I(U;Y|Y')=\max_{P(u|y'),x(u,s')\in \cP_{\pi}} I(U;Y|Y'),
\end{align}
where $\cP_{\pi}\triangleq\{P(u|y'),x(u,s'): \text{there exists a unique stationary output distribution }\pi(y')\}$, and this set is non-empty.
\end{lemma}

The following theorem enables us to compute $C_{\text{FB-CSI}}$, since for any choice of $f(\cdot)$ the feedback capacity expression, $\max_{P(u|y')} I(U;Y|Y')$, can be formulated as a convex optimization problem similar to that of Theorem \ref{theorem:convexity}, yet with input $U$ instead of $X$. 
\begin{theorem}[Convex Optimization for $C_{\text{FB-CSI}}$]
\label{theorem:convexity2}
For any $f:\cU \times \cS \to \cX$  with $|\cU| =  L$, the 
expression for the feedback capacity of a connected NOST channel with CSI available at the encoder, $C_{\text{FB-CSI}}$, given in \eqref{eq:cap2}, can be formulated as the following convex optimization problem:
\begin{subequations}
\begin{alignat}{2}
&\!\underset{P(y',u) \in \cP(\cY \times \cU)}{\mathrm{max}}        &&  
\qquad I(U;Y|Y') \label{eq:optProb2Obejective}
\\
&\mathrm{subject}\text{ }\mathrm{to}                && \qquad \sum\limits_{u,y}P_{Y',U}(\tilde{y},u)P_f(y|u,\tilde{y})-\sum\limits_{y',u}P_{Y',U}(y',u)P_f(\tilde{y}|u,y') = 0, \quad \forall \tilde{y} \in \cY.
\label{eq:UcondStationary}
 \end{alignat}
\end{subequations}
where 
$P_{f}(y|u,y')$, given in \eqref{eq:y-uy}, is determined by $f$ and the NOST channel law.
\end{theorem}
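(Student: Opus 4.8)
The plan is to reduce Theorem~\ref{theorem:convexity2} to Theorem~\ref{theorem:convexity} by regarding the auxiliary variable $U$ as the channel input and the induced transition law $Q_f(y|u,y')$ of \eqref{eq:y-uy} as the effective channel. Fixing a strategy map $f:\cU\times\cS\to\cX$, I would first verify that $Q_f(\cdot|u,y')$ is a genuine conditional PMF: nonnegativity is immediate from \eqref{eq:y-uy}, and summing over $y$ gives $\sum_{y}Q_f(y|u,y')=\sum_{s',x}Q(s'|y')\mathbbm{1}\{x=f(u,s')\}\sum_{y}Q(y|x,s')=\sum_{s'}Q(s'|y')=1$. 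Hence the pair $(U,Y')$ plays exactly the role that $(X,Y')$ plays in Theorem~\ref{theorem:convexity}, and the two programs have formally identical structure once $X$ is replaced by $U$ and $Q(y|x,y')$ by $Q_f(y|u,y')$.

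Next I would identify the feasible set of \eqref{eq:optProb2Obejective}--\eqref{eq:UcondStationary} with the set of admissible laws in \eqref{eq:cap2}. Summing the first term of \eqref{eq:UcondStationary} over $y$ collapses $Q_f$ to unity and yields the $Y'$-marginal $\sum_u P(\tilde y,u)=P_{Y'}(\tilde y)$, while the second term is precisely $P_Y(\tilde y)=\sum_{y',u}P(y',u)Q_f(\tilde y|u,y')$; therefore \eqref{eq:UcondStationary} is equivalent to $P_{Y'}=P_Y$, exactly mirroring the remark following Theorem~\ref{theorem:convexity}. Writing $\pi(y')=P_{Y'}(y')$ and $P(u|y')=P(y',u)/\pi(y')$ on the support, this marginal equality reads $\pi(\tilde y)=\sum_{y'}\pi(y')P(\tilde y|y')$ for the induced kernel $P(y|y')=\sum_u P(u|y')Q_f(y|u,y')$, i.e., $\pi$ is stationary; conversely $\pi(y')P(u|y')$ is feasible for any $P(u|y')$ paired with its stationary $\pi$. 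This is precisely the factorization \eqref{eq:joint2}, so the feasible region of the program and the admissible laws in \eqref{eq:cap2} coincide. Connectivity enters here to guarantee that the induced chain has a unique stationary distribution, so that $\pi$ is unambiguously the one appearing in \eqref{eq:cap2}.

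It then remains to establish concavity of the objective. I would decompose $I(U;Y|Y')=\sum_{y'}P(y')\,I(U;Y\mid Y'=y')$, where for each fixed $y'$ the channel $Q_f(\cdot|\cdot,y')$ is fixed and the map $P(\cdot|y')\mapsto I(U;Y\mid Y'=y')$ is concave by the standard concavity of mutual information in the input law. The summand $P(y')\,I(U;Y\mid Y'=y')$ is the perspective of this concave map evaluated at the pair consisting of the slice $\big(P(y',u)\big)_{u}$ and its sum $P(y')=\sum_u P(y',u)$; composing the jointly concave perspective with this affine map keeps it concave in the slice, and since the slices for distinct $y'$ occupy disjoint coordinates, the full sum is concave in the joint law $P(y',u)$. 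The feasible set is the intersection of the simplex $\cP(\cY\times\cU)$ with the linear constraints \eqref{eq:UcondStationary}, hence convex, so the program is a concave maximization over a convex set. Combining the three steps, its optimum equals $\max_{P(u|y')}I(U;Y|Y')$ for the fixed $f$, and maximizing over the finitely many strategy maps with $|\cU|=L$ recovers $C_{\text{FB-CSI}}$ by Theorem~\ref{theorem:cap2}.

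The main obstacle I anticipate is the concavity argument through the perspective function, in particular the boundary where $P(y')=0$ and $P(u|y')$ is undefined; there the perspective must be taken as its lower-semicontinuous closure (value $0$, which is consistent since the bounded mutual information forces $t\,\phi(\,\cdot/t)\to 0$ as $t\to 0$), and one must check this extension creates no spurious optima. A secondary subtlety is the feasible-set correspondence when the induced chain is reducible and its stationary distribution is non-unique, which is exactly where the connectivity hypothesis restores uniqueness and closes the argument.
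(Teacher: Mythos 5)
Your proposal is correct, and it reaches the theorem by a genuinely different route on the one step that carries all the weight: concavity of the objective. The paper disposes of Theorem~\ref{theorem:convexity2} by pointing back to the proof of Theorem~\ref{theorem:convexity} with $x$ replaced by $u$ and $Q(y|x,y')$ replaced by $Q_f(y|u,y')$; that proof writes $I(U;Y|Y')=H(Y|Y')-H(Y|Y',U)$, shows $H(Y|Y')=\log|\cY|-D\bigl(P(y',y)\,\|\,P(y')U(y)\bigr)$ is concave in the joint law because relative entropy is jointly convex and both of its arguments are linear images of $P(y',u)$, and observes that $H(Y|Y',U)=\sum_{y',u}P(y',u)H_{Q_f}(Y|u,y')$ is linear, so the difference is concave on the entire simplex with no boundary cases to treat. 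You instead decompose $I(U;Y|Y')=\sum_{y'}P(y')\,I(U;Y|Y'=y')$ and combine concavity of mutual information in the input law with joint concavity of the perspective function composed with the affine slicing map. This is equally rigorous and arguably more illuminating---it exhibits the program as a weighted sum of per-output-symbol DMC problems, mirroring the rate-splitting achievability---but it is what forces you to handle the $P(y')=0$ boundary, which you do correctly (the bounded mutual information makes the extension by $0$ the continuous, concavity-preserving closure); the paper's entropy-difference argument sidesteps this entirely. Your explicit identification of the feasible set (constraint \eqref{eq:UcondStationary} $\Leftrightarrow$ $P_{Y'}=P_Y$, factorization into \eqref{eq:joint2}, connectivity for uniqueness of $\pi$) is also more detailed than the paper, which records only that the constraints are linear and leaves the stationarity interpretation to a remark after Theorem~\ref{theorem:convexity}. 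Both routes establish the theorem.
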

As a consequence of Theorem \ref{theorem:convexity2}, the feedback capacity $C_{\text{FB-CSI}}$ given in \eqref{eq:cap2} can be readily computed, because the maximum over functions $f$ that map  $x=f(u,s')$ is equivalent to taking the maximum of the solutions of all $\binom{|\cX|^{|\cS|}}{L}$ convex optimization problems with $|\cU|=L$.

In Section \ref{sec:specialCases}, we provide an example of a connected NOST channel whose $C_{\text{FB}}$ and $C_{\text{FB-CSI}}$ given in the previous theorems are equal, and derive these capacity
expressions there explicitly as detailed in Theorem \ref{theorem:noisyPOSTcap}.
\section{Examples}
\label{sec:specialCases}
This section covers special cases of Setting I and Setting II, and shows how the feedback capacity expressions of each setting, $C_{\text{FB}}$ and $C_{\text{FB-CSI}}$, subsume the corresponding capacity characterizations from the literature. Further, we explain and demonstrate the connectivity property (Definition \ref{definition:connceted}) on the POST($\alpha$) channel \cite{POSTchannel}, and generalize this channel to 
one having a state that is stochastically dependent on the output, a noisy version we thus call "the noisy-POST$(\alpha,\eta)$ channel", for which $C_{\text{FB}}$ and $C_{\text{FB-CSI}}$ are equal, i.e., CSI available at the encoder does not increase its feedback capacity.
\subsection{Special Cases of Setting I -- No CSI}
\label{subsec:set1Special}
\subsubsection{The state is a deterministic function of the output}
The case where $s_i=y_i$ is trivially a special case of Setting I. As explained after Theorem \ref{theorem:cap}, Setting I can also be formulated with the channel state $y'$ and therefore, operationally, both settings are equivalent.
In \cite{chen2005capacity}, the special case  $s_i=y_i$ was studied, but Theorem \ref{theorem:cap} generalizes their result by relaxing the assumption in \cite{chen2005capacity}. In particular, \cite{chen2005capacity} shows that the capacity expression is the one in Theorem \ref{theorem:cap} but subject to \emph{strong irreducibility} and \emph{strong aperiodicity} \cite[Defs. 2,4]{chen2005capacity}\footnote{More accurately, \cite{chen2005capacity} also assumed an additional, unnecessary condition (\cite[Def.~6]{chen2005capacity}) just for simplifying the proof, as it was remarked there that it was not crucial for the feedback capacity theorem.}. Our derivations do not require any aperiodicity assumption, and the strong irreducibility is also relaxed: recall that by \eqref{eq:Qij},
our connectivity condition holds if and only if 
\begin{align}
\forall i,j\in \cY, \; \exists T(i,j): \; (\overbrace{Q\cdots Q}^{T(i,j) \text{times} })_{i j}>0.    
\end{align}
The strong irreducibility in \cite[Def.~2]{chen2005capacity} can be written similarly by changing the maximum in \eqref{eq:Qij} to a minimum. In words, strong irreducibility requires irreducibility (in the usual sense) of the output Markov process $\{Y_i| i=0,1,\dots \}$ with respect to \textit{all} input distributions, while Definition \ref{definition:connceted} only requires the \textit{existence} of an input distribution that induces a path (a positive probability) between any two channel outputs. We proceed to show the significance of this relaxation via the following Example a), then we provide Example b) of a periodic, connected NOST channel.

\paragraph{The POST($\alpha$) channel}
\label{subsec:POST}
\begin{figure}[b]
\begin{center}
\begin{psfrags}
    \psfragscanon
    \psfrag{A}[][][0.8]{$1$}
    \psfrag{B}[][][0.8]{$2$}
    \psfrag{C}[][][0.8]{$1$}
    \psfrag{D}[][][0.8]{$2$}
    \psfrag{E}[][][1]{$s_{i-1}=1$}
    \psfrag{F}[][][1]{$x_i$}
    \psfrag{G}[][][1]{$y_i$}
    \psfrag{H}[][][0.8]{$1$}
    \psfrag{I}[][][0.8]{\raisebox{-0.4cm}{$1-\alpha$\hspace{-0.1cm}}}
    \psfrag{J}[][][0.8]{$\alpha$\hspace{-0.4cm}}
    \psfrag{E1}[][][1]{$Q(s_i|y_i)$}
    \psfrag{F1}[][][1]{$y_i$}
    \psfrag{G1}[][][1]{$s_i$}
    \psfrag{H1}[][][0.8]{$1$}
    \psfrag{I1}[][][0.8]{\raisebox{-0.4cm}{$1-\eta$\hspace{-0.1cm}}}
    \psfrag{J1}[][][0.8]{$\eta$\hspace{-0.4cm}}
    \psfrag{K}[][][0.8]{$1$}
    \psfrag{L}[][][0.8]{$2$}
    \psfrag{M}[][][0.8]{$1$}
    \psfrag{N}[][][0.8]{$2$}
    \psfrag{O}[][][1]{$s_{i-1}=2$}
    \psfrag{P}[][][1]{$x_i$}
    \psfrag{Q}[][][1]{$y_i$}
    \psfrag{R}[][][0.8]{$1-\alpha$}
    \psfrag{S}[][][0.8]{\raisebox{-0.4cm}{$1$}}
    \psfrag{T}[][][0.8]{$\alpha$\hspace{0.6cm}}
\includegraphics[scale=0.28]{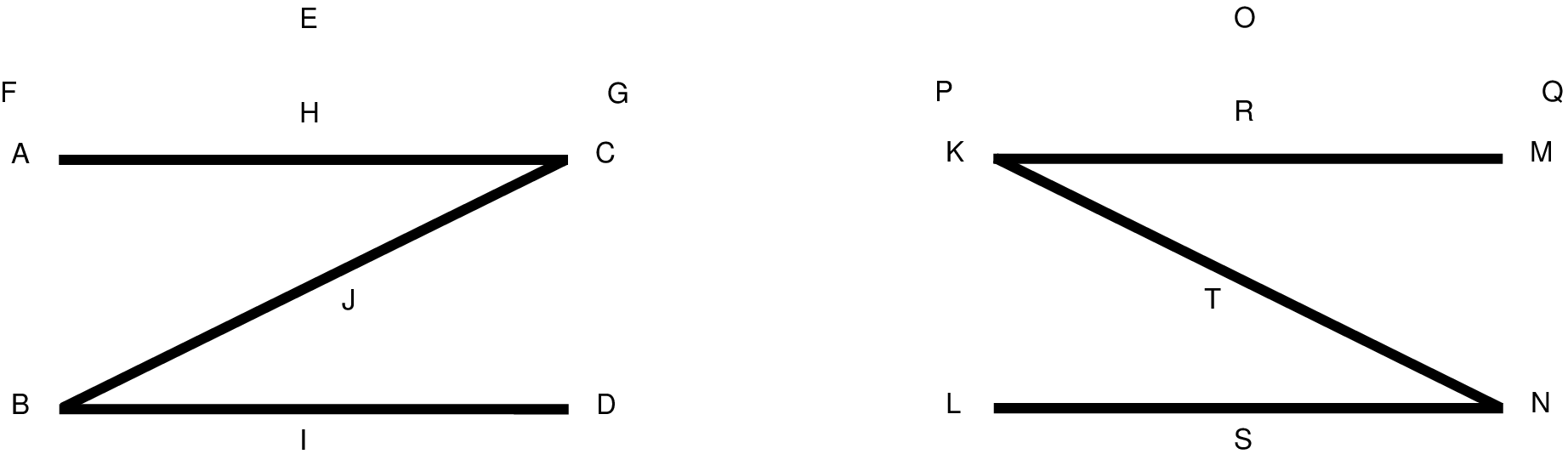}
\caption{The ZS-channel model characterizing the probability of $Q(y_i|x_i,s_{i-1})$, where $\alpha \in [0,1]$. For $s_{i-1}=1$ we have the $Z$ topology, and for $s_{i-1}=2$ we have the $S$ topology.} \label{fig:ZS-channel}
\psfragscanoff
\end{psfrags}
\end{center}
\end{figure}
The POST$(\alpha)$ channel studied in \cite{POSTchannel} is a simple, yet representative, example of an FSC. The alphabets $\cX,\cY,\cS$ are all binary, and the channel output depends on the input and the channel state as shown on Fig. \ref{fig:ZS-channel}.
Specifically, if the input and the channel state are equal, the channel output is equal to them, while otherwise it is a random instance due to parameter $\alpha \in [0,1]$.
The state evolution of the POST($\alpha$) channel is implied by its name, "Previous Output is the STate (POST)" \cite{POSTchannel}, i.e., 
$s_i=y_i$.
The POST($\alpha$) channel is not strongly irreducible under the definition of \cite[Def.~2]{chen2005capacity}, but is a connected NOST channel under Definition \ref{definition:connceted}, demonstrated as follows. For the POST($\alpha$) channel, 
Matrix $Q$ defined in \eqref{eq:Qij}, and Matrix $\tilde{Q}$ defined by replacing the maximum in \eqref{eq:Qij} with a minimum are, respectively, 
\begin{align}
Q=
\begin{bmatrix}
1 & 1-\alpha \\
1-\alpha & 1 
\end{bmatrix},    \quad
\tilde{Q}=
\begin{bmatrix}
\alpha & 0 \\
0 & \alpha 
\end{bmatrix}.    \nn
\end{align}
On the one hand, for any power of $n=1,2,\dots$, entries $(\tilde{Q}^{n})_{1 2}=(\tilde{Q}^{n})_{2 1}=0$, i.e., there is no path from output $y'=1$ to output $y=2$ (and from $y'=2$ to $y=1$); and therefore, the POST($\alpha$) is not strongly irreducible. On the other hand, $Q_{i j}>0$ for all $i,j \in \cY$ and $\alpha\in [0,1)$, and thus the POST($\alpha$) channel is connected (except for $\alpha=1$, in which case the feedback capacity is trivially $0$).
Consequently, Theorem \ref{theorem:cap} recovers its known feedback capacity, which is the closed-form capacity expression of a simple $Z$ channel, as was derived in \cite{POSTchannel}.

It is compelling that many channel instances like the trapdoor \cite{Permuter06_trapdoor_submit}, Ising \cite{Ising_channel} and and POST($\alpha$) share the same channel characterization $Q(y|x,s')$. However, their feedback capacity is fundamentally different due to the channel state evolution. 
In Section \ref{subsec:noisy}, we generalize the POST($\alpha$) channel to have a stochastic state evolution, 
and study its feedback capacity with and without CSI available at the encoder. We now proceed to Example b) of a periodic connected NOST channel that does not satisfy strong aperiodicity \cite[Def.~4]{chen2005capacity}.

\begin{figure}[b]
\begin{center}
\begin{psfrags}
    \psfragscanon
    \psfrag{A}[][][0.8]{$0$}
    \psfrag{B}[][][0.8]{$1$}
    \psfrag{C}[][][0.8]{$0$}
    \psfrag{D}[][][0.8]{$1$}
    \psfrag{E}[][][1]{$s_{i-1}=0$}
    \psfrag{F}[][][1]{$x_i$}
    \psfrag{G}[][][1]{$y_i$}
    \psfrag{H}[][][0.8]{$1-\alpha$}
    \psfrag{I}[][][0.8]{\raisebox{-0.4cm}{$1-\beta$\hspace{-0.1cm}}}
    \psfrag{I1}[][][0.8]{\raisebox{-0.4cm}{$\alpha$}\hspace{-0.3cm}}
    \psfrag{I2}[][][0.8]{\raisebox{-0.4cm}{$\beta$}\hspace{-0.2cm}}
    \psfrag{J1}[][][0.8]{$1$}
    \psfrag{J2}[][][0.8]{\raisebox{-0.3cm}{$1$}}
    \psfrag{J3}[][][0.8]{$1$}
    \psfrag{J4}[][][0.8]{\raisebox{-0.35cm}{$1$}}
    \psfrag{E1}[][][1]{$Q(s_i|y_i)$}
    \psfrag{F1}[][][1]{$y_i$}
    \psfrag{G1}[][][1]{$s_i$}
    \psfrag{K}[][][0.8]{$0$}
    \psfrag{L}[][][0.8]{$1$}
    \psfrag{M}[][][0.8]{$2$}
    \psfrag{N}[][][0.8]{$3$}
    \psfrag{O}[][][1]{$s_{i-1}=1$}
    \psfrag{P}[][][1]{$x_i$}
    \psfrag{Q}[][][1]{$y_i$}
    \psfrag{R}[][][0.8]{$1-\gamma$}
    \psfrag{S}[][][0.8]{\raisebox{-0.4cm}{$1-\delta$}}
    \psfrag{T}[][][0.8]{$\alpha$\hspace{0.6cm}}
    \psfrag{T1}[][][0.8]{$\gamma$\hspace{-0.2cm}}
    \psfrag{T2}[][][0.8]{$\delta$\hspace{-0.2cm}}
    \psfrag{U}[][][0.8]{$2$}
    \psfrag{V}[][][0.8]{$3$}
\includegraphics[scale=0.28]{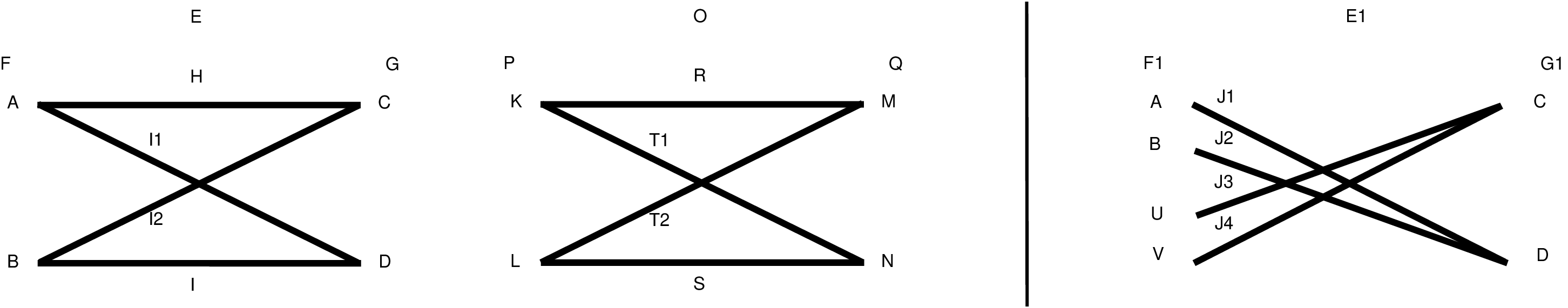}
\caption{An example of a periodic connected NOST channel. On the LHS: the conditional probabilities $Q(y_i|x_i,s_{i-1})$, i.e., for both $s_{i-1}=0,1$, general BIBO channels are obtained with some parameters $0\le \alpha,\beta,\gamma,\delta\le1$. On the RHS: the state evolution $Q(s_i|y_i)$, i.e., $s_i$ is a deterministic function of $y_i$.} \label{fig:periodic-channel}
\psfragscanoff
\end{psfrags}
\end{center}
\end{figure}
\paragraph{A periodic NOST channel}
Let $\cX=\cS=\{0,1\}, \cY=\{0:3\}$, where for both states a general binary-input binary-output channel (BIBO) is obtained, yet with different outputs, as given on the LHS of Fig. \ref{fig:periodic-channel}. 
The state $s_i$ is a deterministic function of the output $y_i$, as given on the RHS of Fig. \ref{fig:periodic-channel}, and it induces 
a periodic Markov output process with period $2$. Although the output Markov chain is periodic, Theorem \ref{theorem:cap} can determine the feedback capacity of this channel, because it is clearly a connected NOST channel. Denote the DMC capacities of the BIBOs in $s_{i-1}=0,1$ by $C_1,C_2$, respectively. Applying Theorem \ref{theorem:cap} gives that $C_{\text{FB}}$ of this periodic channel example is the average of $C_1$ and $C_2$, because any stationary output distribution $\pi(y'), y'\in \cY$ satisfies
    $\pi_{Y'}(0)+\pi_{Y'}(1)=\pi_{Y'}(2)+\pi_{Y'}(3)=0.5$;
thus the feedback capacity is
\begin{align}    
    C_{\text{FB}}^{\text{Per.}}=\sum_{y'\in \cY} \pi(y') I(X;Y|Y'=y')=\left(\pi_{Y'}(0)+\pi_{Y'}(1)\right) C_2+\left(\pi_{Y'}(2)+\pi_{Y'}(3)\right) C_1=\frac{C_1+C_2}{2}.
\end{align}

\subsubsection{The state is independent of the output}
\label{subsec:Set1IID}
In this special case, the channel state evolution satisfies $Q(s_i|y_i)=Q(s_i)$. 
Consequently, for this case, the averaged DMC $Q(y|x,y')$ in \eqref{eq:y-xy} does not depend on the previous channel input $y'$, and can be written as $Q(y|x)\triangleq\sum_{s'} Q(s')Q(y|x,s')$ (which implies that $X$ is independent of $S'$).
The term for $Q(y|x)$ averages the DMCs $Q(y|x,s')$ over the state; the capacity is
    $C=\max_{P(x)} I(X;Y)$, 
and is not increased by feedback \cite{shannon56}.
We show that $C_{\text{FB}}$ is equal to 
$C$ as follows. On the one hand, $Q(y|x,y')=Q(y|x)$ implies that $I(X;Y|Y')\le I(X;Y)$, and on the other hand we have $I(X;Y|Y')\ge I(X;Y)$, which follows by considering $P(x|y')=P(x)$ as this implies that $Y$ and $Y'$ are independent due to $P(y|y')=\sum_x P(x|y') Q(y|y',x)=\sum_x P(x) Q(y|x)=P(y)$.

\subsection{Special Cases 
of Setting II -- CSI Available at the Encoder}
We previously showed that Setting I (without CSI) was operationally equivalent to the setting where $s_i=y_i$ with feedback, 
by arguing that each one of them can be considered as a special case of the other. 
However, Setting II (CSI available at the encoder) cannot be considered a special case of the setting $s_i=y_i$ with feedback and CSI available at the encoder
due to the following explanation.
There is already a real state with a physical meaning, $s'$, known at the encoder, and we cannot introduce a new fictitious state. When CSI is not available, it follows from Theorem \ref{theorem:cap} that the probability of $Y_i$ given $(X_i,Y_{i-1})$ is determined by $Q(y|x,y')$ \eqref{eq:y-xy}, i.e., it is fixed by the NOST channel model because of the Markov chain $S'-Y'-X$, which follows since the encoder does not have access to the states. However, this Markov chain does not necessarily hold when CSI is available at the encoder, and, therefore, the probability of $Y_i$ given $(X_i,Y_{i-1})$ is given by $P(y|x,y')$ which is not fixed only by the NOST channel model, but also by the choice of an auxiliary RV $U$ that maps the real state $S'$ to a channel input $X$ by some function $f:\cU \times \cS \to \cX$, as shown in Theorem \ref{theorem:cap2}.

Another special case of Setting II is where the state is independent of the output, i.e., $Q(s_i|y_i)=Q(s_i)$.
In this case, we obtain a new DMC, $P_f(y|u,y')=\sum_{s',x} Q(s') \mathbbm{1}\{x=f(u,s')\} Q(y|x,s')=P_f(y|u)$, with input $u$ and output $y$, as can be seen from \eqref{eq:y-uy}.
This implies that $U$ and $S'$ become independent. 
The capacity in this case was derived by Shannon 
\cite{Shannon58} as $\max_{P(u),x(u,s')} I(U;Y)$, 
where $U$ is, indeed, an auxiliary RV independent of $S'$. 
Feedback does not increase the capacity of DMCs, and it can be shown that $C_{\text{FB-CSI}}$ recovers Shannon's capacity expression by using the fact that $P_f(y|u,y')=P_f(y|u)$ and
repeating the same arguments presented in Section \ref{subsec:Set1IID} with $U$ instead of $X$.
\subsection{The Noisy-POST$(\alpha,\eta)$ Channel - Special Example for which $C_{\text{FB}}$ and $C_{\text{FB-CSI}}$ are Equal}
\label{subsec:noisy}
\begin{figure}[t]
\begin{center}
\begin{psfrags}
    \psfragscanon
    \psfrag{A}[][][0.8]{$0$}
    \psfrag{B}[][][0.8]{$1$}
    \psfrag{C}[][][0.8]{$0$}
    \psfrag{D}[][][0.8]{$1$}
    \psfrag{E}[][][1]{$s_{i-1}=0$}
    \psfrag{F}[][][1]{$x_i$}
    \psfrag{G}[][][1]{$y_i$}
    \psfrag{H}[][][0.8]{$1$}
    \psfrag{I}[][][0.8]{\raisebox{-0.4cm}{$1-\alpha$\hspace{-0.1cm}}}
    \psfrag{J}[][][0.8]{$\alpha$\hspace{-0.4cm}}
    \psfrag{E1}[][][1]{$Q(s_i|y_i)$}
    \psfrag{F1}[][][1]{$y_i$}
    \psfrag{G1}[][][1]{$s_i$}
    \psfrag{H1}[][][0.8]{$1$}
    \psfrag{I1}[][][0.8]{\raisebox{-0.4cm}{$1-\eta$\hspace{-0.1cm}}}
    \psfrag{J1}[][][0.8]{$\eta$\hspace{-0.4cm}}
    \psfrag{K}[][][0.8]{$0$}
    \psfrag{L}[][][0.8]{$1$}
    \psfrag{M}[][][0.8]{$0$}
    \psfrag{N}[][][0.8]{$1$}
    \psfrag{O}[][][1]{$s_{i-1}=1$}
    \psfrag{P}[][][1]{$x_i$}
    \psfrag{Q}[][][1]{$y_i$}
    \psfrag{R}[][][0.8]{$1-\alpha$}
    \psfrag{S}[][][0.8]{\raisebox{-0.4cm}{$1$}}
    \psfrag{T}[][][0.8]{$\alpha$\hspace{0.6cm}}
\includegraphics[scale=0.28]{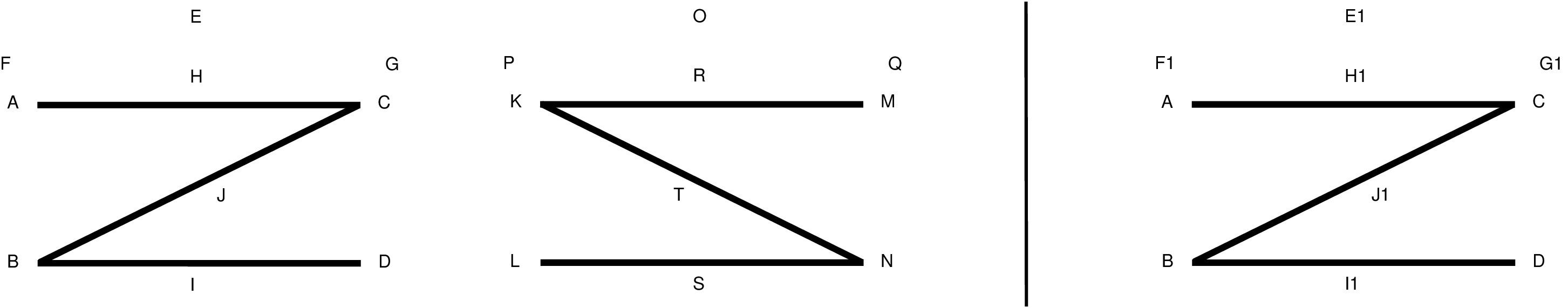}
\caption{The noisy-POST($\alpha,\eta$) channel. On the LHS: the ZS-channel model characterizing the probability of $Q(y_i|x_i,s_{i-1})$, where $\alpha \in [0,1]$. 
On the RHS: the state evolution $Q(s_i|y_i)$ as the $Z$ topology, where $\eta \in[0,1]$.} \label{fig:noisyPOST}
\psfragscanoff
\end{psfrags}
\end{center}
\end{figure}
In this section, we introduce an interesting example of a NOST channel for which $C_{\text{FB}}$ and $C_{\text{FB-CSI}}$ are equal, i.e., CSI available at the encoder does not increase its feedback capacity. This example is a generalization of the POST$(\alpha)$ channel, 
i.e., the channel output depends on the input and the channel state identically to the POST($\alpha$) channel, while the state evolution is generalized, as illustrated in Fig. \ref{fig:noisyPOST}.
We emphasize that in all previous channel instances studied in the literature, such as the trapdoor, Ising and POST($\alpha$), the state evolves according to a deterministic rule and, thus, can be determined at the encoder, while here we focus on a noisy, new version of the POST($\alpha$) channel, in which the state evolves \textit{stochastically} according to parameter $\eta \in [0,1]$. In particular, the channel state depends on the output via a Z-channel, 
i.e., if the output is zero, the next state equals the channel output, and otherwise, the next state equals the output with probability $1-\eta$. We call this generalized channel "the noisy-POST$(\alpha,\eta)$"; note that when $\eta=0$ we obtain the original "Previous Output is the STate (POST)" \cite{POSTchannel} channel. Similarly to the demonstration of the connectivity on the POST($\alpha$) in Section \ref{subsec:set1Special}, it can be verified that the noisy-POST$(\alpha,\eta)$ channel is also connected under Definition \ref{definition:connceted}.

Here, we study the feedback capacity of this noisy-POST$(\alpha,\eta)$ channel with or without CSI available at the encoder,
denoted by $C_{\text{FB-CSI}}^{\text{N-POST}}(\alpha,\eta)$ and $C_{\text{FB}}^{\text{N-POST}}(\alpha,\eta)$, respectively.
For simplicity, we arbitrarily focus on the case of $\alpha=0.5, \eta \in [0,1]$ as summarized in the following theorem, and analyze it. 
\begin{theorem}
\label{theorem:noisyPOSTcap}
For the noisy-POST$(\alpha,\eta)$ channel with any $\alpha, \eta\in [0,1]$, CSI available at the encoder does not increase the feedback capacity, and
\begin{align}
  C_{\text{FB}}^{\text{N-POST}}(0.5,\eta)&=\max_{a,b \in [0,1]} \textstyle \frac{b+\eta}{a+b+\eta} \left( H \left(\frac{a}{2}\right) -a \right)+ \frac{a}{a+b+\eta} \left( H\left( \frac{b+\eta}{2}\right)-b H\left( \frac{\overline{\eta}}{2}\right)-\overline{b} H\left(\frac{\eta}{2}\right) \right). \label{eq:noistPOSTcaphalf}
\end{align}
\end{theorem}
The first result of Theorem \ref{theorem:noisyPOSTcap}, i.e., $C_{\text{FB-CSI}}^{\text{N-POST}}(\alpha,\eta)=C_{\text{FB}}^{\text{N-POST}}(\alpha,\eta)$, 
is proved at the end of this section. The second result of Theorem \ref{theorem:noisyPOSTcap}, i.e., Eq. \eqref{eq:noistPOSTcaphalf}, follows straightforwardly from applying
Theorem \ref{theorem:cap} on the noisy-POST$(0.5,\eta)$ channel where $a\triangleq P_{X|Y'}(1|0)$ and $b \triangleq P_{X|Y'}(0|1)$ are the optimization variables; its derivation is tedious and thus is omitted.

\begin{figure}[t]
\begin{center}
    \includegraphics[scale=0.6]{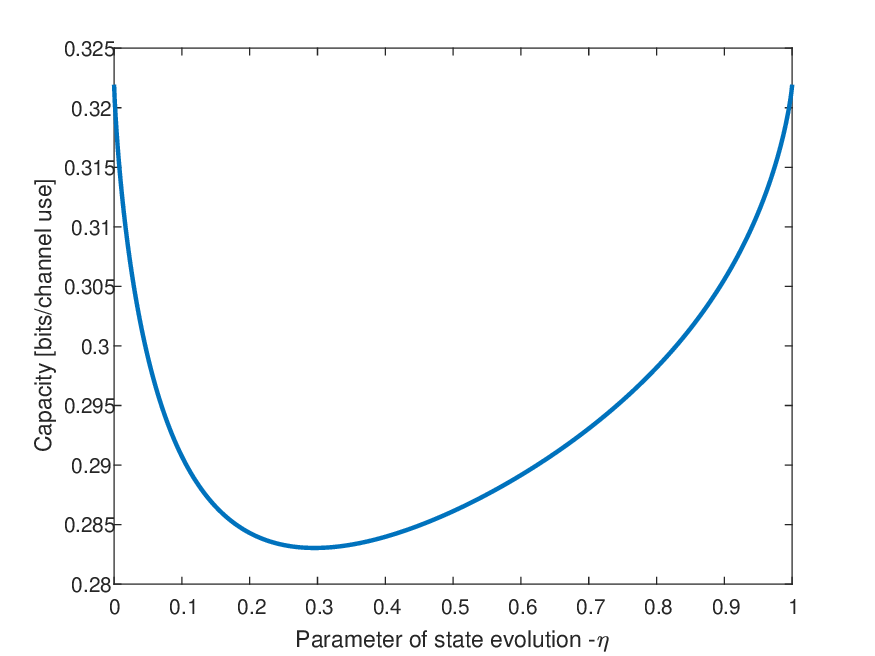}
\end{center}
\caption{The feedback capacity of the noisy-POST$(0.5,\eta)$ channel with or without CSI available at the encoder.}
\label{fig:capPOST}
\end{figure}

In Fig. \ref{fig:capPOST}, 
$C_{\text{FB}}^{\text{N-POST}}(\alpha,\eta)$ is evaluated for $\eta \in [0,1]$ using the convex optimization problem in Theorem \ref{theorem:convexity}. It can be seen that it is a convex function of $\eta$. In particular, for $\eta=0$, the POST($0.5$) channel is obtained, and for $\eta=1$ the $Z$-channel with parameter $0.5$ is obtained; in both cases, the feedback capacity is $-\log_2(0.8)\approx 0.3219$. In the case where $\eta\in(0,1)$, it can be seen that the feedback capacity is less than the capacity of the $Z$-channel. This reflects the rate-loss due to the fact that state is known at the encoder, but not at the decoder.

\begin{figure}[b]
\begin{center}
    \includegraphics[scale=0.66]{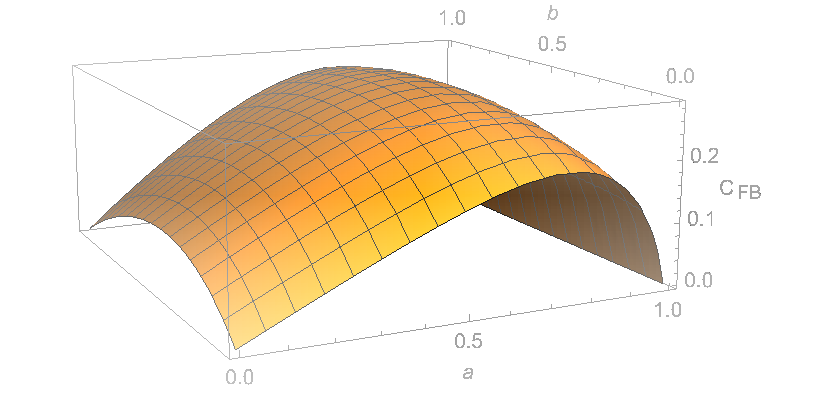}
\end{center}
\caption{The objective function of \eqref{eq:noistPOSTcaphalf} evaluated for all values of $a,b \in [0,1]$ and arbitrary $\eta=0.5$.}
\label{fig:capPOST2}
\end{figure}
For the special case $\eta=0$, the feedback capacity is achieved with $a=b=0.4$, and for $\eta=1$ it is achieves with $a=0.4, b=0.6$. For general $\eta \ne 0,1$, deriving a simpler capacity expression than \eqref{eq:noistPOSTcaphalf} is challenging. In Fig. \ref{fig:capPOST2}, we evaluate the objective function of \eqref{eq:noistPOSTcaphalf} as a function of the optimization variables $a$ and $b$, and $\eta=0.5$. It is interesting to note that although we prove the concavity of the feedback capacity in $P(y',x)$ in Theorem \ref{theorem:convexity}, Fig. \ref{fig:capPOST2} suggests that the feedback capacity of the noisy-POST$(0.5,0.5)$ is also a concave function of $P(x|y')$. A similar phenomenon is observed for other values of $\eta\in(0,1)$ as well.

\begin{table}[b]
\caption{All the strategies of binary input and binary state alphabets, $\cS=\cX=\{0,1\}$.} \centering
\label{table:strategies}
\begin{tabular}[b]{||c|c|c|c||}
\hline \hline
$x(u,s')$ & $s'=0$ & $s'=1$ \\
\hline \hline
$u_0$ & $0$ & $0$ \\
\hline
$u_1$ & $0$ & $1$ \\
\hline
$u_2$ & $1$ & $0$ \\
\hline
$u_3$ & $1$ & $1$ \\
\hline \hline
\end{tabular}
\end{table}
Next, we prove Theorem \ref{theorem:noisyPOSTcap}.
\begin{proof}[Proof of Theorem \ref{theorem:noisyPOSTcap}]
\label{appendix:ProofOfCardinalityNoisyPOST}
We prove here that CSI at the encoder does not increase the feedback capacity of the noisy-POST$(\alpha,\eta)$ channel, i.e.,  $C_{\text{FB}}^{\text{N-POST}}(\alpha,\eta)=C_{\text{FB-CSI}}^{\text{N-POST}}(\alpha,\eta)$. Consider $|\cU|=|\cX|^{|\cS|}=4$ with all possible strategies as detailed in Table \ref{table:strategies}.
By Theorem \ref{theorem:cap2}, assume that $I_{P_1}(U;Y|Y')$ is the feedback capacity of the noisy-POST$(\alpha,\eta)$ channel with CSI available at the encoder, induced by some input distribution $P_1(u|y')$ with the corresponding joint distribution
\begin{align}
    P_1(y',u,x,y)&=\pi_1(y')\sum_{s'} Q(s'|y')P_1(u|y')\mathbbm{1}\{x=f(u,s')\}Q(y|x,s'), \nn
\end{align}
where $\pi_1(y')$ is a stationary output distribution induced from the conditional output distribution $P_1(y|y')$. We construct an input distribution $P_2(x|y')$ with the corresponding conditional mutual information satisfying $I_{P_2}(X;Y|Y')=I_{P_1}(U;Y|Y')$ induced by the joint distribution
\begin{align}
    P_2(y',x,y)&=\pi_2(y')P(x|y') Q(y|x,y'), \nn
\end{align}
where $\pi_2(y')$ is a stationary output distribution induced from the conditional output distribution $P_2(y|y')$ (see Theorem \ref{theorem:cap}). 
Clearly, $I_{P_1}(U;Y|Y')\ge I_{P_2}(X;Y|Y')$; thus, our goal is to show that 
$I_{P_1}(U;Y|Y')\le I_{P_2}(X;Y|Y')$. In the construction of $P_2(x|y')$, we only demand that it satisfies
\begin{align}
    P_2(x|y')=P_1(x|y') \quad \forall x\in \cX, y' \in \cY, \label{eq:preserveP12}
\end{align}
where $P_1(x|y')$ is the input distribution induced by $P_1(u|y')$, and given by 
\begin{align}
    P_1(x|y')= \sum_{u,s'} P_1(u,s',x|y')=\sum_{u,s'} P_1(u|y')Q(s'|y')\mathbbm{1}\{x=f(u,s')\}. \nn
\end{align}
Hence, for the noisy-POST$(\alpha,\eta)$ we obtain
\begin{align}
    P_2(X=1|Y'=0)&\triangleq P_1(u_2|Y'=0)+ P_1(u_3|Y'=0), \label{eq:P2_a}\\
    P_2(X=0|Y'=1)&\triangleq P_1(u_0|Y'=1)+\eta P_1(u_1|Y'=1)+(1-\eta) P_1(u_2|Y'=1). \label{eq:P2_b}
\end{align}
From the construction in \eqref{eq:preserveP12}, it follows that the conditional output distributions are also equal, i.e.,
\begin{align}
    P_2(y|y')=\sum_{x}P_2(x|y') Q(y|x,y')=\sum_{x}P_1(x|y') Q(y|x,y')=P_1(y|y'),\; \forall y',y \in \cY. \nn 
\end{align}
Consequently, $\pi_2(y')=\pi_1(y'), \; \forall y'\in \cY$ and $H_{P_2}(Y|Y')=H_{P_1}(Y|Y')$ hold; thus,
\begin{align}
    I_{P_1}(U;Y|Y')&=H_{P_2}(Y|Y')-H_{P_1}(Y|Y',U) \nn\\
    & \stackrel{(a)} \le H_{P_2}(Y|Y')-H_{P_2 }(Y|Y',X) \nn\\
    &= I_{P_2}(X;Y|Y') , \nn 
\end{align}
where (a) follows from defining 
$q \triangleq  H_{P_1}(Y|Y',U)-H_{P_2}(Y|Y',X)\ge 0$. 
We show that $q\ge 0$ by applying the noisy-POST($\alpha,\eta$) channel model on 
\begin{align}
    H_{P_1}(Y|Y',U)=&\sum_{y'} \pi_1(y') H_{P_1}(Y|Y'=y',U), \nn\\
    H_{P_2}(Y|Y',X)=&\sum_{y'} \pi_2(y') H_{P_2}(Y|Y'=y',X)=\sum_{y'} \pi_1(y') H_{P_2}(Y|Y'=y',X), \nn\\
  P_f(y|u,y')=&\sum_{s',x} Q(s'|y') \mathbbm{1}\{x=f(u,s')\} Q(y|x,s'), \nn
\end{align}
giving the following identities:
\begin{align}
    H_{P_1}(Y|y'=0,U)
    =&P_1(X=0|Y'=0)\stackrel{(a)}= P_2(X=0|Y'=0)=H_{P_2}(Y|y'=0,X), \nn\\
    H_{P_1}(Y|y'=1,U)= & \textstyle P_1(u_0|y'=1) H(\frac{1-\eta}{2})+P_1(u_1|y'=1)H(\eta)+P_1(u_2|y'=1) +P_1(u_3|y'=1) H(\frac{\eta}{2}) \nn\\
    \stackrel{(b)}=& \textstyle P_1(u_0|y'=1) ( H(\frac{1-\eta}{2})-H(\frac{\eta}{2}) ) +P_1(u_1|y'=1) ( H(\eta)-H(\frac{\eta}{2}) ) \nn\\
    & \textstyle +P_1(u_2|y'=1) ( 1-H(\frac{\eta}{2}) ) +H(\frac{\eta}{2}), \nn\\
    H_{P_2}(Y|y'=1,U)=& \textstyle P_2(X=0|Y'=1) H(\frac{1-\eta}{2}) +P_2(X=1|Y'=1) H(\frac{\eta}{2}) \nn\\
    \stackrel{(c)}=& \textstyle P_2(X=0|Y'=1) H(\frac{1-\eta}{2}) +\left(1-P_2(X=0|Y'=1)\right) H(\frac{\eta}{2}) \nn\\
    \stackrel{(d)}=& \textstyle P_1(u_0|y'=1)(H(\frac{1-\eta}{2})-H(\frac{\eta}{2})) + P_1(u_1|y'=1)\eta (H(\frac{1-\eta}{2})-H(\frac{\eta}{2})) \nn\\
    &+ \textstyle P_1(u_2|y'=1)) (1-\eta) (H(\frac{1-\eta}{2})-H(\frac{\eta}{2}))+ H(\frac{\eta}{2}), \nn
\end{align}
where (a) and (d) follow from the construction of $P_2(x|y')$
in \eqref{eq:P2_a}-\eqref{eq:P2_b}; and (b) and (c) follow from substituting $P_1(u_3|y'=1)$ and $P_2(X=1|Y'=1)$, respectively, with their complementary distribution to $1$.
Hence, we deduce that
\begin{equation}
    q= P_1(y'=1) \left(H_{P_1}(Y|y'=1,U)-H_{P_2}(Y|y'=1,X)\right) \ge0, \nn
\end{equation}
because
\begin{align}
    &H_{P_1}(Y|y'=1,U)-H_{P_2}(Y|y'=1,X) \nn\\
    & =\textstyle P_1(u_1|y'=1) ( H(\eta)-(1 - \eta) H(\frac{\eta}{2}) -\eta H(\frac{1-\eta}{2})) + P_1(u_2|y'=1) ( 1-\eta H(\frac{\eta}{2}) -(1-\eta) H(\frac{1-\eta}{2})) \nn\\
    & \ge \textstyle P_1(u_1|y'=1) ( H(\eta)-(1 - \eta) H(\frac{\eta}{2}) -\eta H(\frac{1-\eta}{2})) \nn\\
    &\stackrel{(a)} \ge \textstyle P_1(u_1|y'=1) ( H(\eta)- H(\eta(1-\eta)) \nn\\
    &\stackrel{(b)}
    \ge 0, \nn 
\end{align}
where (a) follows from the concavity of the binary entropy, and (b) is due to $H(\eta)\ge H(\eta(1-\eta))$, which is trivial for $\eta \in [0,0.5]$, and for $\eta \in [0.5,1]$ it is also trivial after using $H(\eta)=H(1-\eta)$.

To conclude, $I_{P_2}(X;Y|Y')=I_{P_1}(U;Y|Y')$, which implies that CSI available at the encoder does not increase the feedback capacity of the noisy-POST($\alpha,\eta$) channel.
\end{proof}

\section{Proofs}
\label{sec:proof_cap}
In this section, we prove our main results given in Section \ref{sec:main_results}. 
In particular, the proofs of the feedback capacity expression, i.e., Theorems \ref{theorem:cap} and \ref{theorem:cap2}, are given in 
Sections \ref{subsec:capStandAlone} and 
\ref{subsec:capCSI}, respectively.
We note that Lemmas \ref{lem:Eli} and \ref{lem:Eli2} are used to establish the achievability proofs of the mentioned Theorems \ref{theorem:cap} and \ref{theorem:cap2}, respectively. As Lemma \ref{lem:Eli2} generalizes Lemma \ref{lem:Eli}, we only prove the former in Section \ref{subsec:eli}. The proof of the cardinality bound of Theorem \ref{theorem:cap2} is provided in Section \ref{subsection:CardinalityBound}.
Finally, Section \ref{subsection:convexity} proves the convex optimization formulations of the feedback capacity expressions, i.e., Theorems \ref{theorem:convexity} and \ref{theorem:convexity2}.
Before all these proofs are given, we introduce the following useful lemma, whose proof is given in Appendix \ref{appendix:MarkovChannel}.
\begin{lemma}
\label{lemma:Helpful}
    For any NOST channel \eqref{eq:DefNOST} in Setting I (without CSI), 
\begin{align}
     Q(y_i|x^{i},y^{i-1},m)&=
     \sum_{s_{i-1} \in \cS} Q(s_{i-1}|y_{i-1})Q(y_i|x_i,s_{i-1})
     \nn\\
     &=Q(y_i|x_i,y_{i-1}). \label{eq:MarkovChannel}
\end{align}
\end{lemma}

\subsection{Proof of Theorem \ref{theorem:cap}}
\label{subsec:capStandAlone}
\subsubsection{Proof of Converse}
Throughout the proof, the initial output, $y_0$, is assumed to be available at both the encoder and the decoder.

For a fixed sequence of $(2^{nR},n)$ codes, where $R$ is an achievable rate, we bound $R$ as 
\begin{align}
    R -\epsilon_{n} &\stackrel{(a)}\le \frac{1}{n} I(M;Y^n) \nn\\
    &\stackrel{(b)}=\frac{1}{n} \sum_{i=1}^n I(M,X_i;Y_i|Y^{i-1}) \nn\\
    &\stackrel{(c)}\le \frac{1}{n} \sum_{i=1}^n H(Y_i|Y_{i-1})-H(Y_i|Y_{i-1},X_i) \nn\\
    &\le \max_{\{P(x_i|y_{i-1})\}_{i=1}^{n}} \frac{1}{n} \sum_{i=1}^n I(X_i;Y_i|Y_{i-1})  ,\label{eq:converseAlternative1}
\end{align}
where $\epsilon_n$ tends to zero as $n \to \infty$, and
\begin{enumerate}[label={(\alph*)}]
\item follows from Fano's inequality;
\item follows from the fact that $x_i$ is a deterministic function of $(m,y^{i-1})$; and
\item follows from the fact that conditioning reduces entropy and from Lemma \ref{lemma:Helpful}.
\end{enumerate}
In Section \ref{subsec:capCSI}, we show a fundamental result on the optimality of time-invariant input distributions in Lemma \ref{lemma:ORON}. To avoid repetition, we refer the reader to follow the proof of Lemma \ref{lemma:ORON} with $x$ instead of $u$, the joint distribution $P(y',x,y)=P(y',x) \sum_{s'} Q(s'|y') Q(y|x,s')$ and
the modified set 
\begin{align}
\mathcal{D}_{\epsilon} \triangleq \{ P(y',x) \in \mathcal{P}_{\cY \times \cX} : |P_{Y'}(y)- \sum_{y',x} P_{Y',X}(y',x) \sum_{s'} Q(s'|y') Q(y|x,s') |\le \epsilon, \forall y \}, \nn
\end{align}
in order to deduce that any achievable rate $R$ must satisfy $R \le \max_{P(x|y')} I(X;Y|Y')$.
\qed

\subsubsection{Proof of Achievability}
We need to prove that rates satisfying
$R<\max_{P(x|y')} I(X;Y|Y')$ are achievable.
However, recalling Lemma \ref{lem:Eli}, which states that it is sufficient to maximize over input distribution that induce a unique stationary output distribution, we prove, for simplicity, that rates satisfying
$R<\max_{P(x|y')\in \cP_{\pi}} I(X;Y|Y')$ are achievable. 
The proof uses rate-splitting where $Y^{n}$ is treated as a time-sharing sequence, and the previous channel output, $Y_{i-1}=y'\in \cY$, determines one of $|\cY|$ DMCs that are multiplexed at the encoder and demultiplexed at the decoder. 
\begin{proof}
At time $i=1$, the encoder transmits an arbitrary input symbol $X_1$, and afterwards $Y_1$ is known both at the decoder and at the encoder by the feedback. More generally, from time $i=2$ on, the previous channel output $Y_{i-1}$ is known at both parties before each transmission. According to the known $Y_{i-1}=y'$, a DMC characterized by $Q_{Y|X}=Q_{Y|X,Y'=y'}$ with message $M_{y'}$ is treated in the current channel use.

\textit{Rate-splitting and code construction:} 
Fix an input distribution $P(x|y')\in \cP_{\pi}$ that achieves $C_{\text{FB}}$ in \eqref{eq:cap}, i.e., a collection of conditional PMFs $P(x|y')$ on $\cX$ for
every $y' \in \cY$ is to be determined such that a unique 
stationary distribution on the outputs, $\pi(y')$, is induced.
By Lemma \ref{lem:Eli}, such $P(x|y')$ always exists on account of the connectivity assumption (Definition \ref{definition:connceted}) and the assumption that $|\cY|$ is finite.
Each message $M$ consists of $|\cY|$ independent sub-messages $M_{y'}\in [1:2^{nR_{y'}}]$, $y' \in \cY$. This implies that $R=\sum_{y'} R_{y'}$.
From the achievability of the channel coding theorem for DMCs, 
in each DMC $Q_{Y|X,Y'=y'},y'\in \cY$, every rate $R_{y'}<I(X;Y|Y'=y')$ is achievable, where the joint distribution is determined by the fixed conditional input distribution given $y'$, i.e., $P(x,y|y')=P(x|y')Q(y|x,y')$. That is, there exists a sequence of $(2^{nR_{y'}},n)$ codes with an average probability of error $P(\hat{M}_{y'}\ne {M}_{y'})$ that tends to zero as $n\to \infty$. For a block length $n$ and $y'\in \cY$, denote the codebook of the $n$th code of such a sequence by $\mathcal{C}_{n,y'}$. Each $\mathcal{C}_{n,y'}$ consists of $2^{nR_{y'}}$ codewords $x^n(m_{y'})$.

Returning to our connected NOST channel: to send message $m=\{m_{y'}| y'\in \cY\}$, at time $i\in [2,n+1]$, with known previous output $Y_{i-1}=y'$, the encoder transmits the next unsent symbol
of codeword $x^n(m_{y'})\in \mathcal{C}_{n,y'}$. Upon receiving the entire output sequence $y^{n+1}$, the decoder demultiplexes it into $|\cY|$ sub-sequences of outputs $y^{n_{y'}}(y')$ whose previous output is $y'\in \cY$, where $n_{y'}$ is the number of times that output $y'$ was "visited" during times $i\in [1:n]$, i.e.,
\begin{align}
    n_{y'}=\sum_{i=2}^{n+1} \mathbbm{1}{\{Y_{i-1}=y'\}}, 
\end{align}
thus $\sum_{y'} {n_{y'}}=n$.
For each sub-sequence $y^{n_{y'}}(y'), y'\in \cY$ of DMC $Q_{Y|X,Y=y'}$, the receiver decodes $m_{y'}$ as in the aforementioned direct coding theorem for DMCs (joint typicality decoding).

\textit{Analysis of the probability of error:} Using this theorem, it follows that the probability of error in decoding each $m_{y'}$ tends to zero as $n \to \infty$ if \begin{align}
   R_{y'} &\le \lim_{n \to \infty} \frac{n_{y'}}{n}I(X;Y|Y'=y')-\delta(\epsilon), \nn\\
   &\stackrel{(a)}=\pi(y') I(X;Y|Y'=y')-\delta(\epsilon),
   \label{eq:DMC_Rate}
\end{align}
where $\delta(\epsilon)$ tends to zero as $\epsilon \to 0$.
Step (a) follows from Birkhoff’s ergodic theorem on Markov chains with a unique stationary distribution (see, e.g., \cite{durrett2019probability}), since the fixed $P(x|y')$ induces a homogeneous Markov chain $\{Y_i| i=0,1,\dots \}$ with a unique stationary output distribution, $\pi(y')$.
Now, the total probability of error in decoding the message $m=\{m_{y'}| y'\in \cY\}$ tends to zero as $n \to \infty$ if
\begin{align}
    R = \sum_{y'} R_{y'} \le \sum_{y'}
    \pi(y') I(X;Y|Y'=y')-\tilde{\delta}(\epsilon)=  I(X;Y|Y')-\tilde{\delta}(\epsilon),
\end{align}
where $\tilde{\delta}(\epsilon)$ tends to zero as $\epsilon \to 0$. This completes the proof of achievability.
\end{proof}

We note that Theorem \ref{theorem:cap} can be derived by another approach based on the directed information, which generally characterizes the capacity of channels with feedback, as given in Appendix \ref{appendix:DI_Proof}.

\subsection{Proof of Theorem \ref{theorem:cap2}}
\label{subsec:capCSI}
\subsubsection{Proof of Converse} 
Here, we prove that 
an achievable rate $R$ must satisfy 
$R \le  \max_{P(u|y')}  I(U;Y|Y')$, where, without loss of generality, $\cU$ is the 
set of all strategies. In this proof, the initial output, $y_0$, is assumed to be available at both the encoder and the decoder, and the proof consists of two parts. In the first part, we show that for achievable rates 
\begin{equation}
\label{eq:UBregular}
R\leq \max_{{\{P(u_i|y_{i-1})\}}_{i=1}^n}
 \frac{1}{n}  \sum_{i=1}^n I(U_i;Y_i|Y_{i-1}) +\epsilon_n,   
\end{equation}
where $U_i\in \cU$ enumerates all possible 
strategies and maps $S_{i-1}$ to $X_i$,
and $\epsilon_n\to 0$ as $n \to \infty$. The second part of the proof, stated by the following Lemma \ref{lemma:ORON}, is to show that it is sufficient to maximize over time-invariant conditional distributions, $P(u|y')$.
\begin{lemma}
For a connected NOST channel with CSI available at the encoder,
\label{lemma:ORON}
\begin{align}
\label{eq:lemmaOron}
    \lim_{n \to \infty} \max_{{\{P(u_i|y_{i-1})\}}_{i=1}^n}
 \frac{1}{n}  \sum_{i=1}^n I(U_i;Y_i|Y_{i-1})
    &\le \max_{P(u|y')}  I(U;Y|Y'),
\end{align}
where $U_i\in \cU$ enumerates all possible mappings from $\cS$ to $\cX$, the joint distribution on the RHS is $P(y',u,y)=\pi(y')P(u|y') P_f(y|u,y')$, and $P_f(y|u,y')$ is given in \eqref{eq:y-uy}.
\end{lemma}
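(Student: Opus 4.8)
The plan is to prove Lemma~\ref{lemma:ORON} by the stationary-converse technique of \cite{OronStationaryConverse}: I will bound the time-averaged multi-letter objective by a single-letter functional evaluated at an \emph{averaged} joint distribution, show that this averaged distribution satisfies the stationarity constraint up to a vanishing error, and then pass to the limit. To set up, for a joint distribution $r(y',u)$ on $\cY \times \cU$ let $I_r(U;Y|Y')$ denote the conditional mutual information computed with the law $r(y',u)Q_f(y|u,y')$, where $Q_f$ is the averaged channel \eqref{eq:y-uy}. Given any input sequence $\{P(u_i|y_{i-1})\}_{i=1}^n$, the induced output marginals $P_{Y_{i-1}}$ are fixed by the dynamics $P_{Y_i}(y)=\sum_{y',u}P_{Y_{i-1}}(y')P(u_i|y')Q_f(y|u,y')$, and each term satisfies $I(U_i;Y_i|Y_{i-1})=I_{r_i}(U;Y|Y')$ with $r_i(y',u)\triangleq P_{Y_{i-1}}(y')P(u_i|y')$.

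First I would invoke the concavity established in Theorem~\ref{theorem:convexity2}, namely that $r\mapsto I_r(U;Y|Y')$ is concave in the joint $r(y',u)$. Setting $\bar r \triangleq \frac{1}{n}\sum_{i=1}^n r_i$, Jensen's inequality then gives $\frac{1}{n}\sum_{i=1}^n I(U_i;Y_i|Y_{i-1}) = \frac{1}{n}\sum_{i=1}^n I_{r_i}(U;Y|Y') \le I_{\bar r}(U;Y|Y')$. Next I would show that the marginal $\bar P_{Y'}(\tilde y)=\frac{1}{n}\sum_{i=1}^n P_{Y_{i-1}}(\tilde y)$ almost satisfies the stationarity constraint \eqref{eq:UcondStationary}. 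Summing the dynamics over $i$ and telescoping, $\frac{1}{n}\sum_{i=1}^n P_{Y_i}(\tilde y) = \sum_{y',u}\bar r(y',u)Q_f(\tilde y|u,y')$, while the left side equals $\bar P_{Y'}(\tilde y)+\frac{1}{n}\big(P_{Y_n}(\tilde y)-P_{Y_0}(\tilde y)\big)$. Since the boundary term is at most $1/n$ in absolute value, $\bar r$ violates \eqref{eq:UcondStationary} by at most $\epsilon_n \le 1/n$; that is, $\bar r$ lies in the analogue of the set $\mathcal{D}_{\epsilon_n}$ (with $u$ in place of $x$) used in the converse of Theorem~\ref{theorem:cap}. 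Consequently $\frac{1}{n}\sum_{i=1}^n I(U_i;Y_i|Y_{i-1}) \le \max_{r\in\mathcal{D}_{\epsilon_n}} I_r(U;Y|Y')$.

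The final step is to take $n\to\infty$. Because the constraint sets $\mathcal{D}_\epsilon$ are nested compact subsets of the simplex decreasing to $\mathcal{D}_0$, and $r\mapsto I_r(U;Y|Y')$ is continuous, the value function $\epsilon\mapsto \max_{r\in\mathcal{D}_\epsilon}I_r(U;Y|Y')$ is upper semicontinuous at $\epsilon=0$, so $\lim_{n\to\infty}\max_{r\in\mathcal{D}_{\epsilon_n}}I_r = \max_{r\in\mathcal{D}_0}I_r$. On $\mathcal{D}_0$ the marginal of any feasible $r$ is stationary for the induced Markov chain $P(y|y')$, which is exactly the form of the joint \eqref{eq:joint2}; hence $\max_{r\in\mathcal{D}_0}I_r(U;Y|Y')=\max_{P(u|y')}I(U;Y|Y')$, with connectivity guaranteeing that $\mathcal{D}_0$ is nonempty and that the relevant stationary distribution $\pi(y')$ is well defined. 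I expect the main obstacle to be this last limiting step: establishing that perturbing the equality constraints by $O(1/n)$ perturbs the optimal value continuously rather than discontinuously, which relies on compactness of the simplex, continuity of the objective, and the fact that $\mathcal{D}_\epsilon$ shrinks to $\mathcal{D}_0$; the telescoping bound on the boundary term is the other place where care is needed to ensure the error is genuinely $O(1/n)$ uniformly in the chosen input sequence.
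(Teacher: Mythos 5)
Your proposal is correct and follows essentially the same route as the paper's proof: bound the time-average by the single-letter functional at the averaged joint distribution via the concavity of Theorem \ref{theorem:convexity2}, use the telescoping argument to place that averaged distribution in $\mathcal{D}_{1/n}$, and pass to the limit $\mathcal{D}_{1/n}\to\mathcal{D}_0$, identifying $\mathcal{D}_0$ with the stationary joints of \eqref{eq:joint2}. Your explicit compactness/upper-semicontinuity justification of the final limiting step is, if anything, slightly more careful than the paper's statement that $\lim_{n\to\infty}\mathcal{D}_{1/n}=\mathcal{D}_0$.
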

The proof of Lemma \ref{lemma:ORON} is based on a method developed in \cite{OronStationaryConverse}, and it is given next in the second part of Section \ref{subsec:capCSI}. The first part of the converse, i.e., Inequality \eqref{eq:UBregular}, is now shown.

For a fixed sequence of $(2^{nR},n)$ codes such that the probability of error $P_e^{(n)}\to 0$ as $n \to \infty$,
we bound
\begin{align}
    R-\epsilon_n
    &\stackrel{(a)}\le\frac{1}{n} \sum_{i=1}^n H(Y_i|Y^{i-1})-H(Y_i|Y^{i-1},M) \nonumber\\
    &\le \frac{1}{n}  \sum_{i=1}^n H(Y_i|Y_{i-1})-H(Y_i|Y^{i-1},M) \nonumber\\
    &\stackrel{(b)}= \frac{1}{n}  \sum_{i=1}^n I(U_i;Y_i|Y_{i-1}) \nonumber\\
    &\stackrel{(c)}\leq 
    \max_{{\{P(u_i|y_{i-1}),P(x_i|u_i,s_{i-1})\}}_{i=1}^n} \frac{1}{n}  \sum_{i=1}^n I(U_i;Y_i|Y_{i-1}) \nonumber\\
       &\stackrel{(d)}= 
    \max_{{\{P(u_i|y_{i-1}),P(v_i),x_i=f_i(u_i,v_i,s_{i-1})
        \}}_{i=1}^n} \frac{1}{n}  \sum_{i=1}^n I(U_i;Y_i|Y_{i-1}) \nonumber\\
       &\stackrel{(e)}\leq 
    \max_{{\{P(\Tilde{u}_i|y_{i-1}),x_i=f_i(\Tilde{u}_i,s_{i-1})
        \}}_{i=1}^n} \frac{1}{n}  \sum_{i=1}^n I(\Tilde{U}_i;Y_i|Y_{i-1}) \nonumber\\
     &\stackrel{(f)}= \max_{{\{P(\Tilde{u}_i|y_{i-1}),x_i=f(s_{i-1},\Tilde{u}_i,i)\}}_{i=1}^n} \frac{1}{n}  \sum_{i=1}^n I(\tilde{U}_i;Y_i|Y_{i-1})  \nonumber\\
 &\stackrel{(g)}\leq \max_{{\{P(\dbtilde{u}_i|y_{i-1}),x_i=f(\dbtilde{u}_i,s_{i-1})\}}_{i=1}^n}
 \frac{1}{n}  \sum_{i=1}^n I(\dbtilde{U}_i;Y_i|Y_{i-1}) \label{UB:last_step}
\end{align}
where $\epsilon_n\to 0$ as $n \to \infty$, and
\begin{enumerate}[label={(\alph*)}]
\item follows from Fano's inequality;
\item follows from defining $U_i\triangleq(M,Y^{i-1})$ for every $i \in [1:n]$; this definition also satisfies the Markov chain $(Y_i,S_i)-(X_i,S_{i-1})-U_i$ due to the assumption that the channel is an FSC; 
\item follows from 
the following lemma, whose proof is given in the next part of Section \ref{subsec:capCSI}: 
\begin{lemma}
\label{lem:UpperBound}
For any $k$, the joint distribution $P(y_{k-1},u_k,y_k)$ is determined by
\[
\{P(u_i|y_{i-1})P(x_i|u_i,s_{i-1})\}_{i=1}^k;
\]
\end{lemma}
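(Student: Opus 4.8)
The plan is to prove the claim by writing out the joint law of all variables produced up to time $k$ in causal order and then marginalizing down to $(Y_{k-1},U_k,Y_k)$, checking at each stage that the only encoder-dependent quantities that survive are the two kernels in the statement. For a fixed code, the joint of $(M,S_0,X_1,Y_1,S_1,\dots,X_k,Y_k)$ factors, by the encoder definition \eqref{eq:enc2} and the FSC law \eqref{eq:DefNOST}, as $P(m)Q(s_0)\prod_{i=1}^{k} P(x_i\mid m,s^{i-1},y^{i-1})\,Q(y_i\mid x_i,s_{i-1})\,Q(s_i\mid y_i)$. Since $U_i\triangleq(M,Y^{i-1},y_0)$, the target marginal is (given the fixed $y_0$) $P(y_{k-1},u_k,y_k)=P(m,y^k)$, so it suffices to show that each conditional output law $P(y_i\mid u_i)=P(y_i\mid m,y^{i-1})$ is a fixed function of the channel and of the single kernel $P(x_i\mid u_i,s_{i-1})$, while the prior $P(m)$ is absorbed into the base factor $P(u_1\mid y_0)$ (recall $M\perp y_0$, so $P(u_1\mid y_0)=P(m)$).

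First I would isolate the state. The decisive structural input is the NOST evolution $Q(s_i\mid y_i)$: because $S_{i-1}$ is generated from $Y_{i-1}$ alone and is conditionally independent of all earlier variables given $Y_{i-1}$, one obtains $P(s_{i-1}\mid u_i)=P(s_{i-1}\mid m,y^{i-1})=Q(s_{i-1}\mid y_{i-1})$, i.e. the state can be summed out \emph{locally} using only the fixed channel component $Q(s\mid y)$ and the output $y_{i-1}$ carried inside $u_i$ (the initial state law $Q(s_0)$ being handled analogously by the convention of Remark \ref{remark:transmission}). Next I would define $P(x_i\mid u_i,s_{i-1})$ as the conditional law of $X_i$ obtained from the deterministic encoder by averaging out the residual state history $S^{i-2}$ given $(U_i,S_{i-1})$; this is a legitimate conditional distribution even though the encoder in principle depends on all of $S^{i-1}$. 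Combining the two gives the per-step identity $P(y_i\mid u_i)=\sum_{x_i,s_{i-1}} Q(s_{i-1}\mid y_{i-1})\,P(x_i\mid u_i,s_{i-1})\,Q(y_i\mid x_i,s_{i-1})$, whose right-hand side depends on the encoder only through $P(x_i\mid u_i,s_{i-1})$.

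With the per-step identity in hand I would finish by induction on $k$ (equivalently, by telescoping $P(m,y^k)=P(m)\prod_{i=1}^{k}P(y_i\mid u_i)$): the base factor is $P(u_1\mid y_0)=P(m)$, and each successive factor $P(y_i\mid u_i)$ is determined, as just shown, by $\{P(u_i\mid y_{i-1}),P(x_i\mid u_i,s_{i-1})\}$ together with the fixed channel and initial law. Hence $P(y_{k-1},u_k,y_k)$ is a fixed functional of $\{P(u_i\mid y_{i-1})\,P(x_i\mid u_i,s_{i-1})\}_{i=1}^{k}$, which is exactly the assertion and is what justifies bounding the per-letter terms $I(U_i;Y_i\mid Y_{i-1})$ by a maximization over these kernels in step (c).

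The step I expect to be the main obstacle is the local decoupling of the state, i.e. the identity $P(s_{i-1}\mid m,y^{i-1})=Q(s_{i-1}\mid y_{i-1})$ and the attendant reduction of the encoder's dependence from the full history $(M,S^{i-1},Y^{i-1})$ to the local kernel $P(x_i\mid u_i,s_{i-1})$. This is precisely where the NOST assumption \eqref{eq:DefNOST} — that the state is a memoryless stochastic function of the previous output — is essential; without it the state would retain memory of $S^{i-2}$ that is not recoverable from $Y_{i-1}$, and the marginal $P(y_{k-1},u_k,y_k)$ would no longer be a function of the local kernels alone.
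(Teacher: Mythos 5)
Your proof is correct and takes essentially the same route as the paper's: the paper argues by induction on $k$, and its inductive step is precisely your per-step factorization $P(y_{k-1},s_{k-1},u_k,x_k,y_k)=P(y_{k-1})Q(s_{k-1}|y_{k-1})P(u_k|y_{k-1})P(x_k|u_k,s_{k-1})Q(y_k|x_k,s_{k-1})$, resting on the same two conditional independencies you identify (the state decoupling $P(s_{k-1}|u_k)=Q(s_{k-1}|y_{k-1})$ from the NOST law \eqref{eq:DefNOST}, and $P(y_k|u_k,s_{k-1},x_k)=Q(y_k|x_k,s_{k-1})$ from the FSC law \eqref{eq:DefFSC}). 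Your telescoping of $P(m,y^k)$ is just the unrolled form of that induction, and your explicit remark that $P(x_i|u_i,s_{i-1})$ must be defined by averaging out the residual state history $S^{i-2}$ is a correct filling-in of a point the paper leaves implicit.
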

\item follows from the Functional Representation Lemma \cite{el2011network}, i.e., for every $i \in [1:n]$ there exists a RV $V_i$, such that $X_i$ can be represented as a function of $(U_i,S_{i-1},V_i)$, where $V_i$ is independent of $(U_i,S_{i-1})$, and the Markov chain $(Y_i,S_i)-(U_i,S_{i-1},X_i)-V_i$ holds (hence $(Y_i,S_i)-(X_i,S_{i-1})-(V_i,U_i)$ holds as well),
and from the following lemma, whose proof 
uses the aforementioned properties of $V_i$ and is similar to that of Lemma \ref{lem:UpperBound}, and therefore it is omitted: 
\begin{lemma}
\label{lem:UpperBound2} 
For any $k$, the joint distribution $P(y_{k-1},u_k,y_k)$ is determined by 
\[
\{P(u_i|y_{i-1})P(v_i)x_i(v_i,u_i,s_{i-1})\}_{i=1}^k;
\]
\end{lemma}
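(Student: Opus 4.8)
The plan is to obtain Lemma~\ref{lem:UpperBound2} as a direct corollary of Lemma~\ref{lem:UpperBound}, the only new ingredient being that the encoder's conditional law is now supplied by the Functional Representation Lemma. Concretely, the $V_i$-process induces the local conditional
$P(x_i|u_i,s_{i-1})=\sum_{v_i}P(v_i)\mathbbm{1}\{x_i=x_i(v_i,u_i,s_{i-1})\}$,
where $V_i\perp(U_i,S_{i-1})$. Hence any functional of $\{P(u_i|y_{i-1})P(x_i|u_i,s_{i-1})\}_{i=1}^k$ is automatically a functional of $\{P(u_i|y_{i-1})P(v_i)x_i(v_i,u_i,s_{i-1})\}_{i=1}^k$, and the statement reduces to verifying that the conditional-independence structure exploited in the proof of Lemma~\ref{lem:UpperBound} is unaffected by the insertion of the auxiliary variables $V_i$.

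First I would record the two Markov properties that drive the recursion. From the NOST law \eqref{eq:DefNOST}, the state is a memoryless noisy image of the previous output, $S_{i-1}\sim Q_{S|Y}(\cdot|Y_{i-1})$, and is generated immediately after $Y_{i-1}$; therefore $S_{i-1}$ is conditionally independent of $(M,Y^{i-2})$ given $Y_{i-1}$, and since $Y_{i-1}$ is a coordinate of $U_i=(M,Y^{i-1},y_0)$ this gives $P(s_{i-1}|u_i)=Q_{S|Y}(s_{i-1}|y_{i-1})$; the freshness of $V_i$ leaves this untouched. Second, the Markov chain $(Y_i,S_i)-(X_i,S_{i-1})-(V_i,U_i)$ furnished in step~(d) yields $Y_i\perp(U_i,V_i)\mid(X_i,S_{i-1})$, so $P(y_i|u_i,s_{i-1},v_i)=Q_{Y|X,S'}(y_i\mid x_i(v_i,u_i,s_{i-1}),s_{i-1})$. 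Combining these with $V_i\perp(U_i,S_{i-1})$ produces the one-step kernel
\begin{align}
P(y_i|u_i)=\sum_{s_{i-1},v_i}Q_{S|Y}(s_{i-1}|y_{i-1})\,P(v_i)\,Q_{Y|X,S'}(y_i\mid x_i(v_i,u_i,s_{i-1}),s_{i-1}),\nn
\end{align}
which depends only on $P(v_i)$ and the map $x_i(\cdot)$.

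Next I would close the induction exactly as for Lemma~\ref{lem:UpperBound}. Because $U_{i+1}=(U_i,Y_i)$ merely appends the fresh output, one has $P(u_{i+1})=P(u_i)P(y_i|u_i)$, equivalently $P(u_i)=P(u_i|y_{i-1})P(y_{i-1})$ with $P(y_{i-1})=\sum_{u_{i-1}}P(u_{i-1})P(y_{i-1}|u_{i-1})$. Starting from the fixed base $P(u_1)$ (determined by $P(m)$ and the given $y_0$) and feeding in the kernel above, induction shows that $P(u_k)$ is a function of $\{P(u_i|y_{i-1})P(v_i)x_i(v_i,u_i,s_{i-1})\}_{i=1}^{k-1}$; multiplying by the $i=k$ kernel and using $P(y_{k-1},u_k,y_k)=P(u_k)P(y_k|u_k)$ (valid since $y_{k-1}$ is a coordinate of $u_k$) delivers the claim.

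The step I expect to be the main obstacle is precisely the identity $P(s_{i-1}|u_i)=Q_{S|Y}(s_{i-1}|y_{i-1})$. The encoder genuinely reads the whole state history $S^{i-1}$, and the states are \emph{not} conditionally independent of $X^{i-1}$; consequently the output process $\{Y_i\}$ need not be Markov under an arbitrary code, so one cannot shortcut the argument with a Markov recursion on $Y$ alone and must instead propagate the full law $P(u_i)$ of $(M,Y^{i-1})$ through the induction. It is only the NOST structure $Q_{S|Y}(s_i|y_i)$ --- the state depending on the past solely through the last output, which is itself stored in $U_i$ --- that collapses the dependence on $S^{i-2}$ and makes each per-step kernel a function of no more than the local data $P(v_i)$ and $x_i(\cdot)$; for a general FSC this collapse fails and the lemma would break down.
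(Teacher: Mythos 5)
Your proposal is correct and follows essentially the paper's intended route: the paper omits this proof, stating only that it uses the Functional Representation Lemma properties of $V_i$ (independence of $(U_i,S_{i-1})$ and the Markov chain $(Y_i,S_i)-(X_i,S_{i-1})-(V_i,U_i)$) and mirrors the induction of Lemma~\ref{lem:UpperBound}, which is exactly what you do via the induced conditional $P(x_i|u_i,s_{i-1})=\sum_{v_i}P(v_i)\mathbbm{1}\{x_i=x_i(v_i,u_i,s_{i-1})\}$ and the NOST factorization $P(s_{i-1}|u_i)=Q_{S|Y}(s_{i-1}|y_{i-1})$. The only difference is cosmetic bookkeeping: you propagate the full law $P(u_i)$ of $(M,Y^{i-1})$ through the induction, whereas the paper's template propagates the joint $P(y_{i-1},u_i,y_i)$ and reuses the given conditionals $P(u_i|y_{i-1})$ at each step; both close the same induction.
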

\item follows from defining $\Tilde{U_i}\triangleq(U_i,V_i)$; hence, 
\[P(\Tilde{u}_i|y_{i-1})=P(v_i|y_{i-1})P(u_i|v_i,y_{i-1}),\]
as $P(u_i|y_{i-1})$ and $P(v_i)$ are sub-domains of $P(u_i|v_i,y_{i-1})$ and $P(v_i|y_{i-1})$, respectively; 
\item follows since there exists a time-invariant function $f$ such that $f(\Tilde{u},s,i)=f_i(\Tilde{u}_i,s_{i-1})$; and
\item follows from defining $\dbtilde{U}=(\Tilde{U}_i,T=i)$, where $T$ represents the time index.
\end{enumerate}
For simplicity of appearance, we replace $\dbtilde{U}_i$ with $U_i$ and obtain from \eqref{UB:last_step} that any achievable rate must satisfy \eqref{eq:UBregular}, where $\cU$ is the aforementioned set of all strategies, i.e., $|\cU|=|\cX|^{|\cS|}$ (increasing the cardinality of $\cU$ beyond $|\cX|^{|\cS|}$ cannot increase the objective function further).
Finally, the proof is completed by Lemma \ref{lemma:ORON}. \qed
\subsubsection{Proofs of Technical Lemmas \ref{lemma:ORON}-\ref{lem:UpperBound}}
\begin{proof}[Proof of Lemma \ref{lemma:ORON}]
The proof is divided into two parts. In the first part, $\frac{1}{n}  \sum_{i=1}^n I(U_i;Y_i|Y_{i-1})$ is upper bounded for any $n$ and joint distribution on $(U^n,Y^n)$. Subsequently, in the second part of the proof we take the limit of this bound when $n$ tends to infinity in order to obtain \eqref{eq:lemmaOron}. 

The first part of the proof is as follows. For any $n$ and ${\{P(y_{i-1},u_i)\}}_{i=1}^n$,
\begin{align}
\label{eq:step}
    \frac{1}{n}  \sum_{i=1}^n I(U_i;Y_i|Y_{i-1}) &\stackrel{(a)}\le I(U;Y|Y') \nn\\
    &\stackrel{(b)}\le \max_{P \in \mathcal{D}_{\frac{1}{n}}}  I(U;Y|Y'),
\end{align}
where for Steps
\begin{enumerate}[label={(\alph*)}]
\item the joint distribution on the RHS is $\tilde{P}(y',u,y)=\tilde{P}(y',u) \sum_{s',x} Q(s'|y') \mathbbm{1}\{x=f(u,s')\} Q(y|x,s')$, in which
\begin{align}
\tilde{P}(y',u)\triangleq \frac{1}{n} \sum_{i=1}^n P_{Y_{i-1},U_i}(y',u), \label{eq:averageDist}  
\end{align}
$\cU$ is defined identically to all $\cU_i$, i.e., it is the set of all strategies mapping $s'$ to $x$ by the deterministic function $f$, and this step follows from the fact that
$I(U;Y|Y')$ is concave in the joint distribution $P(y',u)$ as is explained in particular in the proof of Theorem \ref{theorem:convexity2} given in Section \ref{subsection:convexity};
\item the notation $\mathcal{D}_{\epsilon}$ denotes the set
\begin{align}
\mathcal{D}_{\epsilon} \triangleq &\{ P(y',u) \in \mathcal{P}_{\cY \times \cU} :\left|P_{Y'}(y)- \sum_{y',u} P_{Y',U}(y',u) Q(y|u,y') \right|\le \epsilon, \forall y \}, \nn
\end{align}
where $Q(y|u,y')\triangleq \sum_{s',x} Q(s'|y') \mathbbm{1}\{x=f(u,s')\} Q(y|x,s')$, and for any codebook of length $n$, its induced probability, $\tilde{P}(y',u)$, lies in $\mathcal{D}_{\frac{1}{n}}$, i.e., 
$|\tilde{P}_{Y'}(y)- \sum_{y',u} \tilde{P}_{Y',U}(y',u) Q(y|u,y')|\le \frac{1}{n}
$ for all $y$, because by using the definition of $\tilde{P}(y',u)$ given in \eqref{eq:averageDist} we obtain
\begin{align}
\left|\tilde{P}_{Y'}(y)-\sum_{y',u} \tilde{P}_{Y',U}(y',u) Q(y|u,y') \right|
    =& \frac{1}{n} \left| \sum_{y',u} \sum_{i=1}^{n}  P_{Y_{i-1}}(y)- P_{Y_{i-1},U_i}(y',u)Q(y|u,y') \right| \nn \\
    =& \frac{1}{n} \left| \sum_{i=1}^{n} \sum_{y',u} P_{Y_{i-1}}(y)- P_{Y_{i-1},U_i}(y',u)Q(y|u,y') \right| \nn \\
    =& \frac{1}{n} \Bigg| \sum_{i=2}^{n} \sum_{y',u} P_{Y_{i-1}}(y)- P_{Y_{i-2},U_{i-1}}(y',u) Q(y|u,y') \nn\\
    & +P_{Y_0}(y)- P_{Y_{n-1},U_n}(y',u) Q(y|u,y')
    \Bigg| \nn \\
    =& \frac{1}{n} \left| P_{Y_0}(y)-P_{Y_{n}}(y)\right| \nn \\
    \le& \frac{1}{n},
\end{align}
which follows from $\sum_{y',u} P_{Y_{i-1},U_{i}}(y',u) Q(y|u,y')=P_{Y_i}(y)$ for any $i \in [1:n]$. 
\end{enumerate}
This completes the first part of the proof. In the second part of the proof, we obtain from \eqref{eq:step} 
\begin{align}
    \lim_{n \to \infty} \max_{{\{P(u_i|y_{i-1})\}}_{i=1}^n}
 \frac{1}{n}  \sum_{i=1}^n I(U_i;Y_i|Y_{i-1})
    &\le\lim_{n\to \infty}
    \max_{P \in \mathcal{D}_{\frac{1}{n}}} I(U;Y|Y') \nn\\
    &\stackrel{(a)}= \max_{P \in \mathcal{D}_{0}} I(U;Y|Y') \nn\\
    &\stackrel{(b)} = \max_{P(u|y')}  I(U;Y|Y'),
\end{align}
where 
\begin{enumerate}[label={(\alph*)}]
\item follows since any $P\in \mathcal{D}_0$ satisfies $P\in \cap_{n=1}^{\infty} \mathcal{D}_{\frac{1}{n}}$ due to the fact that $\frac{1}{n}$ is positive for all $n$,
and vice versa, i.e., any $P\in \cap_{n=1}^{\infty} \mathcal{D}_{\frac{1}{n}}$ satisfies $P\in \mathcal{D}_0$
because $\frac{1}{n}$ monotonically decreases in $n$; hence,
$\lim_{n\to \infty} \mathcal{D}_{\frac{1}{n}} =  \mathcal{D}_{0}$;
\item follows since $\mathcal{D}_{0}$ implies the set of all $P(y',u)$ that have a stationary output distribution, i.e., $P_{Y'}(y')=P_{Y}(y')$; recall that since the output set $\cY$ is assumed to be finite, there always exists a stationary output distribution (not necessarily unique) 
with regard to any $P(u|y')$, thus $\mathcal{D}_{0}$ is non-empty.
\end{enumerate}
This concludes the proof.
\end{proof}

\begin{proof}[Proof of Lemma \ref{lem:UpperBound}]
We prove by induction that the joint distribution
$P(y_{k-1},u_k,y_k)$ is determined by $\{P(u_i|y_{i-1})P(x_i|u_i,s_{i-1})\}_{i=1}^k$, where $U_i\triangleq(M,Y^{i-1},y_0)$ and 
$y_0$ is assumed to be known at both the encoder and the decoder.
For $k=1$,
\begin{align}
    P(u_1,y_1|y_0)&=\sum_{s_0,x_1} P(s_0,u_1,x_1,y_1|y_0) \nn\\
    &=\sum_{s_0,x_1} Q(s_0) P(u_1|y_0) P(x_1|u_1,s_0) Q(y_1|x_1,s_0), \nonumber 
\end{align}
which follows from the facts that: 
$U_1=(M,y_0)$, where $M$ and $S_0$ are independent and $Y_1-(X_1,S_0)-M$ forms a Markov chain due to the FSC Markov property \eqref{eq:DefFSC}. 
Suppose that the lemma is true for $k-1$, i.e., $P(y_{k-2},u_{k-1},y_{k-1})$ is determined by $\{P(u_i|y_{i-1}),P(x_i|u_i,s_{i-1})\}_{i=1}^{k-1}$. Then, for $k$ we have
\begin{align}
P(y_{k-1},u_k,y_k) &= \sum_{s_{k-1},x_k}
P(y_{k-1},s_{k-1},u_k,x_k,y_k) \nn\\
&= \sum_{s_{k-1},x_k} P(y_{k-1})Q(s_{k-1}|y_{k-1})P(u_k|y_{k-1})P(x_k|u_k,s_{k-1}) \label{eq:jointInduction1}
Q(y_k|x_k,s_{k-1}),
\end{align}
which follows from the NOST channel Markov property \eqref{eq:DefNOST} and the definition of $U_k$.
From the induction hypothesis, 
$P(y_{k-1})$ is determined by $\{P(u_i|y_{i-1}),P(x_i|u_i,s_{i-1})\}_{i=1}^{k-1}$. Hence, from \eqref{eq:jointInduction1} it can be seen that $P(y_{k-1},u_k,y_k)$ is determined by $\{P(u_i|y_{i-1}),P(x_i|u_i,s_{i-1})\}_{i=1}^{k}$, which completes the proof.
\end{proof}

\subsubsection{Proof of Achievability}
\begin{figure}[t]
\begin{center}
\begin{psfrags}
    \psfragscanon
    \psfrag{E}[][][0.7]{$M$}
    \psfrag{S}[][][0.7]{\begin{tabular}{@{}l@{}}
    $S^{i-1}$\hspace{0.1cm}
    \end{tabular}}
    \psfrag{A}[\hspace{2cm}][][0.75]{Encoder}
	 \psfrag{F}[\hspace{2cm}][][0.75]{$U_i(M,Y^{i-1})$}
	 \psfrag{W}[\hspace{2cm}][][0.75]{$Q(y_i|x_i,s_{i-1})Q(s_i|y_i)$}
	 \psfrag{M}[\hspace{2cm}][][0.75]{NOST Channel}
	 \psfrag{P}[\hspace{2cm}][][0.75]{NOST Channel $Q(y_i|u_i,s_{i-1})Q(s_i|y_i)$\hspace{-0.25cm}}
	 \psfrag{G}[][][0.75]{$Y_i$}
	 \psfrag{C}[\hspace{2cm}][][0.75]{Decoder}
	 \psfrag{R}[\hspace{2cm}][][0.75]{$S_{i-1}$}
	 \psfrag{K}[][][0.75]{$\hat{M}$}
	 \psfrag{H}[\hspace{2cm}][][0.75]{$Y_i$}
	 \psfrag{D}[\hspace{2cm}][][0.75]{Delay}
	 \psfrag{Z}[\hspace{2cm}][][0.75]{$X_i$}
	 \psfrag{J}[\hspace{2cm}][][0.75]{$Y_{i-1}$}
	 \psfrag{L}[\hspace{2cm}][][0.75]{NOST Channel}
	 \psfrag{T}[\hspace{2cm}][][0.75]
	 {$x_i=f(u_i,s_{i-1})$}
\includegraphics[scale=0.7]{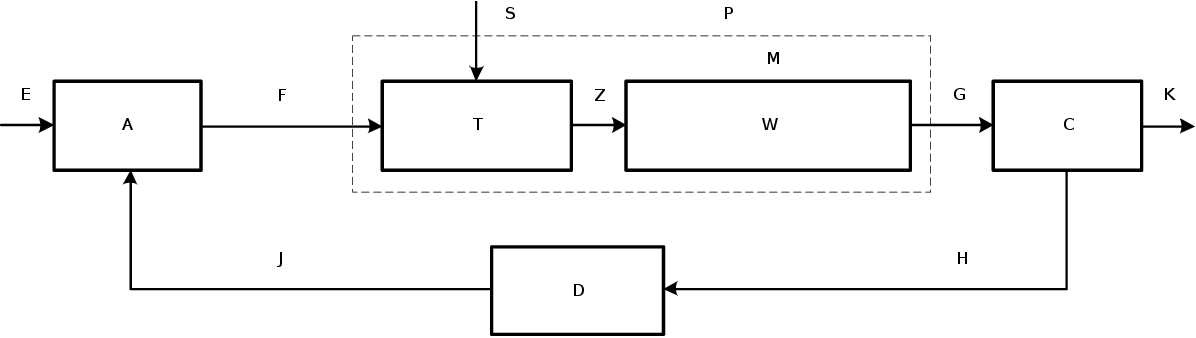}
\caption{An equivalent setting for the lower bound formulation with a new NOST channel $Q(y|u,s')Q(s|y)$ where CSI is unavailable.} \label{fig:shannon_strategy}
\psfragscanoff
\end{psfrags}
\end{center}
\end{figure}
We prove that every rate $R<\max_{P(u|y')} I(U;Y|Y')$, where $\cU$ is the set of all strategies, is achievable. This is shown by converting Setting II into a setting of type I, i.e., where no CSI is available, as is shown in Fig. \ref{fig:shannon_strategy}. 
In particular, 
at time $i\in [2:n]$ (the communication setting starts at time $i=2$ for the same reason given in the proof of achievability of Setting I), the channel input is $U_i$, which is a function of the message and feedback only, without the state, i.e., $U_i(M,Y^{i-1})$.
Then, given the current state $S_{i-1}$, the strategy $U_i$ maps $S_{i-1}$ to input $X_i$, thus inducing a new NOST channel, $Q(y|u,s')Q(s|y)$, with input $U_i$ (rather than $X_i$), in the presence of feedback. The new NOST channel is also connected, because $\cU$ specifically includes all $|\cX|$ strategies that map all states to an input $x\in \cX$. This allows us to use the achievability of Theorem \ref{theorem:cap} (in which Lemma \ref{lem:Eli2} should be used instead of Lemma \ref{lem:Eli}) and deduce that rates that satisfy $R<\max_{P(u|y')} I(U;Y|Y')$ are achievable.
\qed

\subsection{Proofs of Lemma \ref{lem:Eli} and Lemma \ref{lem:Eli2}}
\label{subsec:eli}
Here, we prove Lemma \ref{lem:Eli2} which also generalizes Lemma \ref{lem:Eli}.

\begin{proof}[Proof of Lemma \ref{lem:Eli2}]
Without loss of generality, we assume that 
$\cU$ is the set of all strategies, thus 
all strategies are chosen and 
$|\cU|=|\cX|^{|\cS|}$.
We need to prove that
\begin{align}
\label{eq:statU}
\max_{P(u|y')} I(U;Y|Y')=\max_{P(u|y')\in \cP_{\pi}} I(U;Y|Y'),
\end{align}
where $\cP_{\pi}$ denotes 
the non-empty set of input distributions $P(u|y')$ that
induce a unique stationary output distribution.
We prove it by constructing some $P(u|y')\in \cP_{\pi}$ that achieves the maximum on the LHS of \eqref{eq:statU}.

Since $\cY$ is assumed to be a finite set, 
any input distribution induces at least one stationary output distribution (see, e.g., \cite[Chapter~5.5]{durrett2019probability}).
Let $P^*(u|y')$ be an optimal input distribution that achieves the maximum on the LHS of \eqref{eq:statU}, 
denoted by $I^*(U;Y|
Y')$, and induces at least one stationary output distribution, i.e., $P^*_{Y'}(y')=P^*_Y(y'), \forall y' \in \cY$. If $P^*(u|y')$ induces
a probability transition matrix $P^*(y|y')$ whose stationary output distribution is unique, the proof is concluded. Hence, we assume, otherwise, that $P^*(y|y')$ has infinitely many stationary output distributions.  We show that there always exists $\dbtilde{P}(u|y')\in \cP_{\pi}$, i.e., an input distribution that induces a 
unique stationary output distribution $\dbtilde{\pi}(y')=\dbtilde{P}_{Y'}(y')=\dbtilde{P}_Y(y'), \forall y' \in \cY$ with the corresponding conditional mutual information 
$\dbtilde{I}(U;Y|Y')$, such that 
$\dbtilde{I}(U;Y|Y')=I^*(U;Y|Y')$.

From Markov theory
(see, e.g., \cite[Theorems 5.3.3,5.3.12]{durrett2019probability}),
since $\cY$ is finite, if $P^*(y|y')$ induces only one irreducible subset of $\cY$, there is a unique stationary output distribution on $\cY$; this contradicts our assumption. Therefore, $P^*(y|y')$
induces at least two disjoint irreducible closed subsets of $\cY$.
Assume that $P^*(u|y')$ induces two
irreducible closed subsets of $\cY$, denoted by $\cC_i, i\in \{1,2\}$,
with the corresponding probability transition matrices $P_{\cC_i}^*(y|y'), \forall y',y\in \cC_i$ which are derived from $P^*(y|y')$. Let $\pi_{\cC_i}^*(y'), \forall y'\in \cC_i$, where $\sum_{y' \in \cC_i} \pi_{\cC_i}^*(y')=1$, be the unique stationary distribution induced by $P_{\cC_i}^*(y|y')$, and denote the corresponding
maximal conditional mutual information of each $\cC_i$ by
$I^*_{\mathcal{C}_i}(U;Y|Y')\triangleq \sum_{y' \in C_i} \pi_{\cC_i}^*(y') I^*(U;Y|Y'=y')$.
It follows that $I^*(U;Y|Y')=\max_{i\in\{1,2\}}I^*_{\cC_i}(U;Y|Y')$ since 
if, without loss of generality, $I^*_{\cC_1}(U;Y|Y')\le I^*_{\cC_2}(U;Y|Y')$, then
\begin{equation}
P^*_{Y'}(y')= \begin{cases}
\pi_{\cC_2}^*(y'), & y'\in \cC_2\\
0, & \text{otherwise}
\end{cases}
\label{eq:P_Yc2}
\end{equation}
is a legitimate stationary distribution, i.e., it satisfies 
$P^*_{Y'}(y')=P^*_Y(y'), \forall y' \in \cY$.
Furthermore, we can construct $\dbtilde{P}(u|y')\in \cP_{\pi}$ such that $\dbtilde{I}(U;Y|Y')=I^*_{\cC_2}(U;Y|Y')$ as follows.
Construct $\dbtilde{P}(u|y')$ exactly as $P^*(u|y')$ for all $y' \in \cY$, but with $\dbtilde{P}(u|y'_1), y'_1\in \cC_1$ that induces a positive probability to reach an arbitrary $y_2 \in \cC_2$ from an arbitrary initial $y'_1\in \cC_1$ with some input sequence. This construction is legitimate because the NOST channel is assumed to be connected, and the $|\cX|$ strategies that map all states to a specific input $x$ are also a part of the set of all strategies. 
This construction of $\dbtilde{P}(u|y')$ renders all outputs in $\cC_1$ transient and $\cC_2$ a unique irreducible closed subset in $\cY$. That is, $\dbtilde{P}(u|y')$ induces a unique stationary output distribution on $\cY$ as given in \eqref{eq:P_Yc2}, and $\dbtilde{I}(U;Y|Y'=y')=I^*_{\cC_2}(U;Y|Y'=y')$ for any $y'\in \cC_2$. Hence, $\dbtilde{I}(U;Y|Y')=I^*(U;Y|Y')$ as desired.

The construction can be extended in the case of multiple disjoint irreducible closed subsets of $\cY$ as there can be at most $|\cY|$ subsets, which is a finite number. Hence, it can be deduced that \eqref{eq:statU} holds. Finally, using the cardinality bound of Theorem \ref{theorem:cap2}, whose proof
(given next in this section) shows that $I^*(U;Y|Y')$ can be achieved with at most $L\le |\cX|^{|\cS|}$ strategies such that $P^*(y|y')$ is preserved, we conclude that $I^*(U;Y|Y')$ can always be achieved with some $P(u|y'), x(u,s') \in \cP_{\pi}$ where $|\cU|\le L$, which concludes the proof.
\end{proof} 

\subsection{Cardinality Bound}
\label{subsection:CardinalityBound}
As the cardinality bound $|\cU|\le |\cX|^{|\cS|}$ is trivial (there are $|\cX|^{|\cS|}$ strategies in total), here we prove the
non-trivial cardinality bounds on $\cU$ in Theorem \ref{theorem:cap2},
%
i.e., $|\cU|\le   (|\cX|-1)|\cS||\cY|+1$ and $|\cU|\le (|\cY|-1)|\cY|+1$. If either of them is less than $|\cX|^{|\cS|}$, it implicitly means that not all strategies are required in order to achieve the feedback capacity, but only $L$ of them at most.
\begin{proof}[Proof of Cardinality Bounds]
We invoke the support lemma \cite[p. 631]{el2011network},
which is a consequence of the Fenchel-Eggle-ston-Caratheodory theorem \cite{eggleston}, twice, for the auxiliary RV $U$. In each use, we show how the measures of the feedback capacity, i.e., the conditional entropies in $I(U;Y|Y')=H(Y|Y')-H(Y|Y',U)$, are preserved, thereby implying both non-trivial cardinality bounds. In other words, assuming $U$ takes values in an arbitrary alphabet $\cU$, we prove that 
given any $(Y',S',U,X)$, there exists 
$(Y',S',\tilde{U},X)$ with $|\tilde{\mathcal{U}}|\le \min \{(|\cX|-1)|\cS||\cY|+1, (|\cY|-1)|\cY|+1 \}$ such that $I(U;Y|Y')=I(\tilde{U};Y|Y')$.

We begin with proving $|\cU|\le  (|\cX|-1)|\cS||\cY|+1$. $\tilde{U}$ must have $(|\cX|-1)|\cS||\cY|$ letters to preserve $P(x|s',y')$ for all $s',y'\in \cS \times \cY$.
If $P(x|s',y')$ is preserved, $P(y',y)$ is preserved as well, because  $P(y',y)=\pi(y')P(y|y')$, where $\pi(y')$ is a stationary distribution induced from the probability transition matrix $P(y|y')$ that can also be expressed by
\begin{align}
    P(y|y')&=\sum_{s',x} Q(s'|y') P(x|s',y') Q(y|x,s'). \nn
\end{align}
Additionally, if $P(y',y)$ is preserved, $H(Y|Y')$ is preserved, too. Finally, $\tilde{U}$ must have another letter to preserve $H(Y|Y',U)$. This concludes the 
the proof of the first non-trivial cardinality bound.

Following so, we prove $|\cU|\le (|\cY|-1)|\cY|+1$. However, this time, we aim to preserve $P(y|y')$ for all $y',y\in \cY$, directly. To address this, $\tilde{U}$ must have $(|\cY|-1)|\cY|$ letters, thereby preserving $H(Y|Y')$. Finally, $\tilde{U}$ must have another letter to preserve $H(Y|Y',U)$, which concludes the proof. \end{proof}

\subsection{
Convex Optimization Formulations}
\label{subsection:convexity}
Firstly, we prove Theorem \ref{theorem:convexity}, and subsequently Theorem \ref{theorem:convexity2} follows likewise.
\begin{proof}[Proof of Theorem \ref{theorem:convexity}]
In this proof we show that \eqref{eq:optProb1Obejective}-\eqref{eq:condStationary} is a convex optimization problem, i.e., the maximization domain is convex, Constraints \eqref{eq:condStationary} are linear
and the objective \eqref{eq:optProb1Obejective} is concave. The maximization is over the probability simplex ${\cP_{\cY \times \cX}}$, which is convex, and Constraints \eqref{eq:condStationary} are linear in ${\cP_{\cY \times \cX}}$ because
$Q(y|x,y')$ is the averaged channel in \eqref{eq:y-xy}. Finally, we show that the objective function in \eqref{eq:optProb1Obejective} is concave as follows.
The conditional mutual information characterizing the feedback capacity can be written as $I(X;Y|Y')=H(Y|Y')-H(Y|Y',X)$. 
The first conditional entropy can be expressed as 
$H(Y|Y')=\log |\cY|-D(P(y',y)||P(y')U(y))$,
where $U(y)=\frac{1}{|\cY|}$ is the uniform distribution over $\cY$, because
\begin{align}
  D(P(y',y)||P(y')U(y)) &=\sum_{y',y} P(y',y) \log \frac{P(y|y')}{U(y)}=\log|\cY|-H(Y|Y'). \nn
\end{align}
(This conditional entropy identity is an extension of the known entropy identity 
$D(P(y)||U(y))=\log |\cY|-H(Y)$, see \cite[Eq. (2.93)]{CovThom06}).
The relative entropy above is convex in the pair $(P(y',y), P(y')U(y))$, which are linear in $P(y',x)$ due to 
$P(y',y)=\sum_x P(y',x) Q(y|x,y')$ and $P(y')=\sum_x P(y',x)$, 
and thus $H(Y|Y')$ is concave in $P(y',x)$. On the other hand, the second conditional entropy is 
\begin{align}
    H(Y|Y',X)&=\sum_{y',x} P(y',x) H_Q(Y|x,y'), 
\end{align}
where $H_Q(Y|x,y')$ is a constant determined by $Q(y|x,y')$; i.e., $H(Y|Y',X)$ is linear in $P(y',x)$. Thus, 
the difference between both conditional entropies is concave in $P(y',x)$, which completes the proof.
\end{proof}
The proof of Theorem  \ref{theorem:convexity2} is similar to that of Theorem \ref{theorem:convexity}, but with replacing all occurrences of $x$ with $u$ and referring to $P_f(y|u,y')$ \eqref{eq:y-uy}, which is determined by the NOST channel model for a fixed $f$, instead of referring to the averaged channel $Q(y|x,y')$.
\section{Conclusions and Further Work}
\label{sec:conclusions}
A family of FSCs called \emph{NOST channels} is introduced, and their feedback capacities is derived as single-letter formulas under a connectivity condition in two scenarios: with and without CSI availability at the encoder. These formulas are shown to be computable by formulating the capacity expressions as convex optimization. Furthermore, it is demonstrated via the noisy-POST channel that CSI at the encoder may not increase the feedback capacity.

This work is part of an ongoing progress on the feedback capacity of FSCs. Obtaining a general computable capacity formula with or without CSI is challenging, and even the capacity of particular channels cannot always be obtained in a closed form. We remark here on several interesting research directions that follow from the current work. 
It may be possible to extend our capacity results to a countable channel output alphabet $\cY$. This is not straightforward from the current derivation, as we explain.
We previously mentioned, from Markov theory, that for a finite Markov chain there always exists at least one stationary distribution. However, for an infinite Markov chain, there may be no stationary distributions at all\footnote{A simple such known example is the symmetric random walk on the integers: $P_{Y|Y'}(i-1|i)=P_{Y|Y'}(i+1|i)=0.5, \forall i\in \cY\triangleq \mathbb{Z}$, which has no solution $\pi$ of $\pi P=\pi$ that is a legitimate distribution. We note that in this example, $\cY$ is irreducible. While for a finite Markov chain, irreducibility induces the existence of a unique stationary distribution, a fact we use throughout our derivations, this is not necessarily true for an infinite set. 
In fact, an irreducible Markov chain has a stationary distribution if and only if it is positive recurrent (see, e.g., \cite[Th.~5.5.12]{durrett2019probability}).}.
Another interesting research direction is to study NOST channels without feedback or in the regimes of noisy and delayed feedback links.
\appendices
\section{Proof of Lemma \ref{lemma:Helpful}}
\label{appendix:MarkovChannel}
\begin{align}
    Q(y_i|x^{i},y^{i-1},m) &=\sum_{s_{i-1} \in S} Q(s_{i-1}|x^{i},y^{i-1},m)Q(y_i|x^{i},y^{i-1},s_{i-1}) \nn \\
    &\stackrel{(a)}=\sum_{s_{i-1} \in S} Q(s_{i-1}|y^{i-1},m)Q(y_i|x^{i},y^{i-1},s_{i-1}) \nn \\
&\stackrel{(b)}=\sum_{s_{i-1} \in S} Q(s_{i-1}|y_{i-1})Q(y_i|x_i,s_{i-1}) \nn\\
&\stackrel{(c)}=Q(y_i|x_{i},y_{i-1}),
\end{align}
where
\begin{enumerate}[label={(\alph*)}]
\item follows from \eqref{eq:enc}, i.e., for each time $i$, $x_i$ is a function of $(m,y^{i-1})$.
\item follows from the NOST channel model \eqref{eq:DefNOST},
\item follows from the fact that $(x^{i-1},y^{i-2},m)$ do not appear in the summation.
\end{enumerate}
\qed
\section{Appendix - Proof of Theorem \ref{theorem:cap} Based on the Directed Information}
\label{appendix:DI_Proof}
The \textit{directed information} from $X$ to $Y$ conditioned on $S$, introduced by Massey \cite{massey1990causality} and employed with conditioning in \cite{PermuterWeissmanGoldsmith09}, is defined as 
\begin{equation}
I(X^n\rightarrow Y^n|S)\triangleq \sum_{i=1}^{n} I(X^i;Y_i|Y^{i-1},S).
\end{equation}
The \textit{causally conditional distribution}, introduced in \cite{Kramer03,permuter2006capacity}, is defined as
\begin{equation}
    P(x^n||y^{n-1})\triangleq \prod_{i=1}^n P(x_i|x^{i-1},y^{i-1}). 
\end{equation}
The feedback capacity of any FSC was shown in \cite{PermuterWeissmanGoldsmith09} to be bounded by
\begin{align}
\label{eq:capFSC}
    \lim_{n \to \infty} \frac{1}{n}  {\max_{\substack{P(x^n||y^{n-1})}}} \min_{s_0} I(X^n\rightarrow Y^n|s_0) \le C_{\text{FB}} \le \lim_{n \to \infty} \frac{1}{n}  {\max_{\substack{P(x^n||y^{n-1})}}} \max_{s_0} I(X^n\rightarrow Y^n|s_0).
\end{align}
For Setting I with any initial state $s_0$, we obtain 
\begin{align}
    \label{eq:DI-equations}
    \lim_{n \to \infty} \frac{1}{n} \max_{\substack{P(x^n||y^{n-1})}}     I(X^n\rightarrow Y^n|s_0)&= \lim_{n \to \infty} \frac{1}{n} \max_{\substack{P(x^n||y^{n-1})}} \sum_{i=1}^n I(X^i;Y_i|Y^{i-1},s_0) \nn\\
    &\stackrel{(a)}=\lim_{n \to \infty} \frac{1}{n} \max_{\substack{P(x^n||y^{n-1})}} \sum_{i=1}^n H(Y_i|Y^{i-1},s_0)-H(Y_i|Y_{i-1},X_i) \nn\\
    &\stackrel{(b)}= \lim_{n \to \infty} \max_{\{P(x_i|y_{i-1})\}_{i=1}^{n}} \frac{1}{n} \sum_{i=1}^n I(X_i;Y_i|Y_{i-1})  \nn\\
    &\stackrel{(c)}= \max_{\substack{P(x|y')}} I(X;Y|Y'),
\end{align}
where,
\begin{enumerate}[label={(\alph*)}]
\item follows from Lemma \ref{lem:Eli};
\item is explained by justifying the direct ($\ge$) and the converse ($\le$): the direct follows from maximizing over $\{P(x_i|y_{i-1}) \}_{i=1}^n$ for all $i$, which is a sub-domain of $P(x^n||y^{n-1})$, hence, $H(Y_i|Y^{i-1},s_0)=H(Y_i|Y_{i-1})$ due to the Markov chain 
\begin{align}
P(y_i|y^{i-1},s_0)&=\sum_{s_{i-1},x_i} Q(s_{i-1}|y_{i-1})P(x_i|y_{i-1})Q(y_i|x_i,s_{i-1}) =P(y_i|y_{i-1});    \label{eq:MarkovY}
\end{align}
while the converse follows from $H(Y_i|Y^{i-1},s_0)\le H(Y_i|Y_{i-1})$, and then identifying that for all $i$, the summand $I(X_i;Y_i|Y_{i-1})$ is induced by $P(y_{i-1},x_i,y_i)=P(y_{i-1})P(x_i|y_{i-1})Q(y_i|x_i,y_{i-1
})$; and
\item is also explained by justifying the direct and the converse:
the direct follows from maximizing over time-invariant input distributions that induce a unique stationary output distribution, then applying Lemma \ref{lem:Eli}; while the converse follows from maximizing over all time-invariant input distributions and applying Lemma \ref{lemma:ORON}.
\end{enumerate}
Finally, it is concluded 
from \eqref{eq:capFSC} and \eqref{eq:DI-equations}
that $\max_{\substack{P(x|y')}} I(X;Y|Y')$ is the feedback capacity, which is independent of 
the initial state, $s_0$, because \eqref{eq:DI-equations} holds for any $s_0$ and it does not affect the objective. \qed

\newpage

\bibliographystyle{IEEEtran}
\bibliography{NOST_submit}

\end{document}